\newlength{\abstractwidth}
  \newtheorem{lemma}{Lemma}
  \newtheorem{conj}{Conjecture}
\numberwithin{equation}{section}
\def\lfig#1#2#3#4#5{
\begin{figure}[t]
 \centerline{\includegraphics[width=#3]{#2}}
 \vspace{#5}
  \caption{#1 \label{#4}}
 \end{figure}
}
\def\bea{\begin{eqnarray}}
\def\eea{\end{eqnarray}}
\def\be{\begin{equation}}
\def\ee{\end{equation}}
\def\ba{\begin{align}}
\def\ea{\end{align}}
\def\bse{\begin{subequations}}
\def\ese{\end{subequations}}
\newcommand{\nn}{\nonumber}
\def\sign{{\rm sgn}}
\def\Im{\,{\rm Im}\,}
\def\Re{\,{\rm Re}\,}
\def\Ad{{\rm Ad}}
\def\ad{{\rm ad}}
\def\Aut{{\rm Aut}}
\def\({\left(}
\def\){\right)}
\def\[{\left[}
\def\]{\right]}
\def\<{\left\langle}
\def\>{\right\rangle}
\def\hf{{1\over 2}}
\newcommand{\p}{\partial}
\newcommand{\de}{\mathrm{d}}
\newcommand{\Yrm}{\mathrm{Y}}
\newcommand{\I}{\mathrm{i}}
\newcommand{\eps}{\epsilon}
\newcommand{\vth}{\vartheta}
\newcommand{\cB}{\mathcal{B}}
\newcommand{\cF}{\mathcal{F}}
\newcommand{\cC}{\mathcal{C}}
\newcommand{\cK}{\mathcal{K}}
\newcommand{\cM}{\mathcal{M}}
\newcommand{\cX}{\mathcal{X}}
\newcommand{\cP}{\mathcal{P}}
\newcommand{\cZ}{\mathcal{Z}}
\newcommand{\cI}{\mathcal{I}}
\newcommand{\cH}{\mathcal{H}}
\newcommand{\cU}{\mathcal{U}}
\def\scD{\mathscr{D}}
\def\scU{\mathscr{U}}
\def\scD{\mathscr{D}}
\def\Zv{\mathscr{Z}}
\def\Usf{\mathsf{U}}
\newcommand{\IT}{\mathds{T}}
\newcommand{\IR}{\mathds{R}}
\newcommand{\IC}{\mathds{C}}
\newcommand{\IZ}{\mathds{Z}}
\newcommand{\IN}{\mathds{N}}
\newcommand{\IP}{\mathds{P}}
\newcommand{\sgn}{\mbox{sgn}}
\newcommand{\tc}{\tilde c}
\def\tX{\tilde X}
\def\tT{\tilde T}
\def\cl0{\tilde c_0}
\def\bOm{\overline\Omega}
\def\ba{\bar a}
\def\by{\bar y}
\def\bZ{\bar Z}
\def\bt{\bar t}
\def\hx{\hat x}
\def\hK{\hat K}
\def\hPhi{\hat\Phi}
\def\hPsi{\hat\Psi}
\def\hpsi{\hat\psi}
\def\hcX{\hat\cX}
\def\hcU{\hat\cU}
\def\hUsf{\hat\Usf}
\def\hbcU{\hat{\overline{\cU}}}
\newcommand{\bfn}{{\boldsymbol n}}
\newcommand{\bfl}{{\boldsymbol l}}
\def\cXsf{\cX^{\rm sf}}
\def\hcXsf{\hat\cX^{\rm sf}}
\def\Phisf{\Phi^{\rm sf}}
\def\hPhisf{\hPhi^{\rm sf}}
\def\Phir{\Phi}
\def\cXr{\mathsf{X}^{\rm ref}}
\def\cXz{\mathsf{X}^{\rm sf}}
\def\tcXz{\mathsf{\tX}^{\rm sf}}
\def\cXb{\cX^{\rm b}}
\def\cXrp{\mathsf{X}^{\rm ref+}}
\def\cXrm{\mathsf{X}^{\rm ref-}}
\def\hcXr{\mathsf{X}}
\def\hcXrp{\mathsf{X}^{+}}
\def\hcXrm{\mathsf{X}^{-}}
\def\hcXrpm{\mathsf{X}^{\pm}}
\def\Kr{K^{\rm ref}}
\def\frKr{\mathfrak{K}}
\def\Wr{W^{\rm ref}}
\def\cHr{\cH^{\rm ref}}
\def\ker{\mathcal{K}}
\def\IS{\lefteqn{\textstyle\sum}\int}
\def\tIS#1{\resizebox{0.013\vsize}{!}{$\displaystyle{\IS_{#1}}$}}
\def\prodi#1#2{\lefteqn{\prod_{#1}^{#2}}\mathop{\phantom{\prod}}\nolimits}
\def\Gact{\Gamma_{\rm act}}
\begin{document}

\thispagestyle{empty}


\

\vskip 1cm

\begin{center}
{\Large \bf Quantum TBA for refined BPS indices}
\vskip 0.2in

{\large Sergei Alexandrov and Khalil Bendriss
}

\vskip 0.15in

$^\dagger$ {\it
Laboratoire Charles Coulomb (L2C), Universit\'e de Montpellier,
CNRS, F-34095, Montpellier, France}

%

\vskip 0.2in

\begin{abstract}
Refined BPS indices give rise to a quantum Riemann-Hilbert problem that is inherently related 
to a non-commutative deformation of moduli spaces arising in gauge and string theory compactifications.
We reformulate this problem in terms of a non-commutative deformation of a TBA-like equation 
and obtain its formal solution as an expansion in refined indices. As an application of this construction, 
we derive a generating function of solutions of the TBA equation in the unrefined case.
\end{abstract}

\vspace*{2mm} 
{\tt e-mail:
	{sergey.alexandrov@umontpellier.fr},
	{khalil.bendriss@umontpellier.fr}
}

\end{center}

\newpage
\tableofcontents

\baselineskip=13.5pt
\setcounter{page}{1}
\setcounter{equation}{0}
\setcounter{footnote}{0}
\setlength{\parskip}{0.15cm}

\section{Introduction}

The BPS spectrum provides invaluable information about any supersymmetric theory.
If the BPS states are counted with sign, one obtains BPS indices that are stable under deformations 
of parameters and thus provide a connection between weak and strong coupling regimes.
They encode the entropy of BPS black holes, determine instanton corrections to effective actions 
and coincide with various topological invariants of manifolds relevant to a given physical problem. 
Examples of such topological invariants captured by BPS indices include generalized Donaldson-Thomas (DT) invariants,
Vafa-Witten (VW) invariants, Seiberg-Witten (SW) invariants and many others.
Typically, they compute the Euler number of some moduli space such as the moduli space of some stable objects 
or instantons of some gauge group.

However, the Euler number is only one of the topological characteristics of a manifold.
A more refined information is given by its Betti numbers, which can be combined into Poincar\'e-Laurent polynomials 
to produce the {\it refined} version of topological invariants
\be
\label{defOmref}
\Omega(\gamma,y) = 
\sum_{p=0}^{2d} (-y)^{p-d}\, b_p(\cM_\gamma),
\ee
where $\cM_\gamma$ is the relevant moduli space characterized by charge $\gamma$, $d$ is its complex dimension,
and $y$ is the refinement parameter. In physics, this parameter arises as a fugacity conjugate to the angular momentum 
$J_3$ carried by BPS states in four dimensions and is switched on by
Nekrasov's $\Omega$-background \cite{Moore:1998et,Nekrasov:2002qd}, while $\Omega(\gamma,y)$ is called refined BPS index.

In contrast to the unrefined indices $\Omega(\gamma)$ obtained from \eqref{defOmref} at $y=1$, 
the refined indices are deformation invariant only in the presence of an R-symmetry.
It does exist in the cases of string compactifications on non-compact manifolds and supersymmetric gauge theories
\cite{Gaiotto:2010be}, 
but it is absent for string compactifications on compact Calabi-Yau (CY) threefolds. 
Nevertheless, even in such setups, the refined indices are expected to satisfy the standard wall-crossing relations
\cite{ks,Dimofte:2009bv,Dimofte:2009tm} and carry an important information about the theory.

A remarkable fact is that the refinement is inherently related to non-commutativity \cite{ks,Dimofte:2009bv}. 
In particular, the refined BPS indices define a quantum Riemann-Hilbert (RH) problem \cite{Barbieri:2019yya},
which is a natural non-commutative deformation of the RH problem set up by the unrefined DT invariants \cite{Bridgeland:2016nqw}.
The latter in turn is known to encode the geometry of some quaternionic spaces 
\cite{Bridgeland:2019fbi,Bridgeland:2020zjh,Alexandrov:2021wxu}, 
which in physics appear either as the moduli space of circle compactifications of 4d $N=2$ SYM \cite{Gaiotto:2008cd}
or as the hypermultiplet moduli space of CY string compactifications \cite{Alexandrov:2008gh,Alexandrov:2011va}.
Importantly, this classical RH problem can be reformulated as a system of integral TBA-like equations
\cite{Gaiotto:2008cd,Alexandrov:2009zh,Alexandrov:2010pp}, 
which have played an important role in recent advances in various fields 
\cite{Cecotti:2014zga,Ito:2017ypt,Kachru:2018van,Alexandrov:2018lgp,DelMonte:2022kxh}
and are similar to the TBA equations appearing in the context of AdS/CFT correspondence \cite{Alday:2009dv,Alday:2010ku}.
Therefore, it is natural to ask whether a similar reformulation exists for the quantum RH problem associated with 
the refined BPS indices.

In fact, at least two such reformulations have already appeared in the literature. 
First, a quantum version of the TBA introduced in \cite{Gaiotto:2008cd} was suggested in \cite{Cecotti:2014wea}.
However, as we show in appendix \ref{ap-qTBA}, though it seems very natural, 
it actually does {\it not} solve the quantum RH problem defined by the refined BPS indices.
Second, an integral equation involving the non-commutative Moyal star product
was put forward in \cite{Alexandrov:2019rth}. It was shown that a simple integral of its formal perturbative solution $\cXr_\gamma$
defines a potential that has two remarkable properties. On one hand, in the string theory context, 
it provides a refined version of the instanton corrected four-dimensional dilaton 
transforming as a Jacobi form under S-duality. On the other hand, in the pure mathematical framework of 
the so-called Joyce structures \cite{Bridgeland:2019fbi,Bridgeland:2020zjh}, it gives a solution to
a non-commutative deformation of the Plebanski heavenly equation \cite{Alexandrov:2021wxu}.
However, the proposed integral equation has a very unusual form, quite different from the standard TBA equations.
Furthermore, it does not have a well-defined unrefined limit, nor do its solutions!
Thus, it cannot be considered as a proper quantum deformation of the TBA solving the classical RH problem.

In this paper we revise this situation. We find that the solution $\cXr_\gamma$ of the integral equation 
introduced in \cite{Alexandrov:2019rth} can be used to define new functions $\hcXr_\gamma$ that 
do solve the quantum RH problem and reduce to solutions $\cX_\gamma$ of the classical TBA.
We obtain an explicit perturbative series for $\hcXr_\gamma$. 
However, we were not able to find any simple integral equation that it satisfies --- only equations 
involving infinitely many terms seem to be possible. 
Thus, the best way to define $\hcXr_\gamma$ is to pass through the integral equation defining $\cXr_\gamma$.

An interesting feature of $\hcXr_\gamma$ is that these functions for all charges $\gamma$ can be obtained 
by the adjoint action of a charge-independent potential, which can be considered as a kind of generating function for 
$\hcXr_\gamma$. Remarkably, its logarithm has a well-defined unrefined limit where it reduces to a generating function
of $\cX_\gamma$'s, for which we conjecture an explicit infinite series representation. 
This solves a long standing problem since this generating function was known only up to 
second order in the perturbative expansion \cite{Alexandrov:2017qhn}.
 
Finally, we show that, in the case of an uncoupled BPS structure, the functions $\hcXr_\gamma$
reproduce the solution to the quantum RH problem given in \cite{Barbieri:2019yya}.

The structure of the paper is the following. In the next section, we review 
the classical RH problem defined by BPS indices and its reformulation in terms of TBA-like equations.
In section \ref{sec-ref}, we introduce the quantum RH problem, 
non-commutative integral equation from \cite{Alexandrov:2019rth},
new functions $\hcXr_\gamma$ and describe their properties.
In section \ref{sec-genfun}, we derive generating functions for $\hcXr_\gamma$ and $\cX_\gamma$.
Then in section \ref{sec-case} we consider the case of an uncoupled BPS structure
and we conclude in section \ref{sec-conl}.
The appendices contain the analysis of the proposal from \cite{Cecotti:2014wea}
and proofs of various statements made in the main text.

\section{BPS indices and TBA}
\label{sec-TBA}

Let us assume that we have a set of BPS indices $\Omega(\gamma)$, characterized by charge $\gamma$, an element
of an even-dimensional lattice $\Gamma=\IZ^{2n}$, and satisfying the symmetry property $\Omega(-\gamma) =\Omega(\gamma)$.
In addition, the lattice $\Gamma$ is equipped with a skew-symmetric integer pairing $\langle\, \cdot\, ,\,\cdot\, \rangle$
and a linear map $Z\ :\ \Gamma\to \IC$, which is called central charge.
We also introduce BPS rays 
\be 
\ell_\gamma=\{t\in\IP^1 \ :\ Z_\gamma/t\in\I \IR^-\}.
\label{BPSray}
\ee 
Given a ray $\ell$, we denote
\be 
\Gamma_\ell=\{\gamma\in\Gamma_\star\ :\ \I Z_{\gamma}\in \ell, \ \Omega(\gamma)\ne 0\},
\label{setGl}
\ee  
where $\Gamma_{\!\star}=\Gamma\backslash\{0\}$, and call the ray active if $\Gamma_\ell$ is non-empty.
We assume that we are away from walls of marginal stability which means that, for any active BPS ray, 
if $\gamma,\gamma'\in \Gamma_\ell$ then they are mutually local, i.e. $\langle\gamma,\gamma'\rangle=0$.
This set of data then gives rise to the following Riemann-Hilbert problem, which is a slight generalization of the problem
introduced in \cite{Bridgeland:2016nqw}:

{\bf RH problem:} {\it  Find piece-wise holomorphic functions $\cX_\gamma(t)$ such that
\begin{enumerate}
	\item 
	$\cX_\gamma\cX_{\gamma'}=\cX_{\gamma+\gamma'}$;
	\item 
	their leading behavior at $t\to 0$ and $t\to \infty$, possibly up to a factor depending on the setup, 
	is captured by functions $\cXsf_\gamma(t)$;
	\item 
	they jump across active BPS rays in such a way that, if  
	$\cX_\gamma^\pm$ are values on the clockwise and anticlockwise sides of $\ell$, respectively,
	then they are related by the Kontsevich-Soibelman (KS) symplectomorphism 
	\be 
	\cX_\gamma^-=\cX_\gamma^+ \prod_{\gamma'\in\Gamma_\ell}
	\(1-\sigma_{\gamma'} \cX_{\gamma'}^+\)^{\Omega(\gamma')\langle\gamma',\gamma\rangle},
	\label{KSjump}
	\ee 
	where $\sigma_\gamma$ is a sign factor known as quadratic refinement and defined by the relation
	$\sigma_\gamma\, \sigma_{\gamma'} = (-1)^{\langle\gamma,\gamma'\rangle} \sigma_{\gamma+\gamma'}$.
\end{enumerate} 
}

The first condition means that there are only $2n$ independent functions that are associated with the generators of the lattice
and form a symplectic vector $\Xi(t)$ so that one can write $\cX_\gamma(t)=e^{2\pi\I \langle\gamma,\Xi(t)\rangle}$.
The second condition specifying the asymptotics at $t\to 0$ and $t\to \infty$ is presented intentionally 
in a vague form because there are actually various RH problems differing by the precise formulation of this condition.
To encompass them all, we had to be vague here. Nevertheless, in all setups described below
one finds that 
\be 
\cXsf_\gamma(t)=\cX_\gamma(t)|_{\Omega=0}.
\label{condOm0}
\ee 
We emphasize however that the condition \eqref{condOm0} cannot replace the second condition above since 
it is a weaker constraint on $\cX_\gamma$.
Finally, the third condition is the most important one since it introduces the dependence on the BPS indices
which control the jumps across BPS rays.

In fact, the functions $\cX_\gamma(t)$ have a geometric meaning playing the role of 
Darboux coordinates\footnote{More precisely, the Darboux coordinates are given by the symplectic vector $\Xi(t)$,
	but we will abuse the terminology and call $\cX_\gamma(t)$ in the same way.}
on the twistor space over a certain quaternionic moduli space $\cM$, where the variable $t$ parametrizes 
the $\IP^1$ fiber of the twistor fibration. The nature of $\cM$ depends on the concrete setup
and determines the exact asymptotic properties of  $\cX_\gamma(t)$. In particular, the parameters of $\cXsf_\gamma(t)$ 
can be seen as coordinates on $\cM$ and, as follows from \eqref{condOm0}, 
the functions $\cXsf_\gamma(t)$ themselves determine the geometry of $\cM$ 
in the case of vanishing BPS indices. Typically, they describe a torus fibration $T^{2n}\to \cM\to\cB$ with flat fibers, 
hence the index `sf' meaning `semi-flat'.

In all cases of interest, the above RH problem can be reformulated as a system of TBA-like equations.
These equations are simplified if one trades the integer BPS indices for their rational counterparts defined by
\be 
\bOm(\gamma) = \sum_{d|\gamma} \frac{1}{d^2}\, \Omega(\gamma/d).
\label{def-bOm}
\ee 
Then they can be written in the following concise form
\be
\cX_0=\cX_0^{\rm sf}
\exp\[\IS_1 K_{01} \cX_1 \].
\quad
\label{TBAeq-sh}
\ee
Here we used notations introduced in \cite{Alexandrov:2021wxu}:
$\cX_i=\cX_{\gamma_i}(t_i)$,
the sign \tIS{i} denotes a combination of the sum over charges $\gamma_i$ and the integral over 
the BPS ray $\ell_{\gamma_i}$ weighted by the rational indices $\bOm(\gamma)$, 
while $K_{ij}(t_i,t_j)$ is an integration kernel with a simple pole at $t_i=t_j$.
The idea is that crossing a BPS ray, one picks up the residue at the pole that produces the KS jump \eqref{KSjump}.
The replacement in the integrand of the usual factor $\log(1-\sigma_{\gamma_1}\cX_1)$ by $\cX_1$ is the effect of 
the use of the rational BPS indices.
The precise definition of \tIS{i} and $K_{ij}$ varies from case to case.
In this paper, we will be interested in the following three setups:

\paragraph{Setup 1.} 
This is the original RH problem introduced in \cite{Bridgeland:2016nqw}.
In this case $\cM$ is a complex hyperk\"ahler (HK) manifold with the triplet of symplectic two-forms $\omega_i$
determined by the Darboux coordinates $\Xi^a(t)$, components of $\Xi(t)$ in a basis $\gamma^a\in\Gamma$
($a=1,\dots, 2n$), through  
\be
\omega=\frac{t}{2}\sum_{a,b} \omega_{ab}\,  \de\Xi^a \,\de\Xi^b= t^{-1} \omega_+ -\I  \omega_3 + t \omega_-\, ,
\ee
where $\omega_{ab}$ is the inverse of the constant (integer-valued) skew-symmetric matrix 
$\omega^{ab}=\langle\gamma^a,\gamma^b\rangle$.
The asymptotic condition on $\cX_\gamma(t)$ (condition 2 in the above RH problem) requires that
at $t\to 0$ it reduces to 
\be
\cXsf_\gamma(t) =  e^{2\pi\I(\theta_{\gamma}-Z_\gamma/t)},
\qquad
\theta_\gamma=\langle \gamma,\theta\rangle,
\qquad
Z_\gamma=\langle \gamma,z\rangle,
\label{cXsf}
\ee 	
while at $t\to \infty$ it behaves polynomially in $t$. The $4n$ complex parameters of $\cXsf_\gamma$,
$z^a=Z_{\gamma^a}$ and $\theta^a=\theta_{\gamma^a}$, play the role of coordinates on $\cM$.
The asymptotic conditions lead to the following data for the TBA equation \eqref{TBAeq-sh} \cite{Alexandrov:2021wxu}
\be
\IS_i=\sum_{\gamma_i\in\Gamma_{\!\star} }
\frac{\sigma_{\gamma_i}\bOm(\gamma_i)}{2\pi\I}\int_{\ell_{\gamma_i}}\frac{\de t_i}{t_i^2}\, ,
\qquad
K_{ij}=\gamma_{ij}\,\frac{t_i t_j}{t_j-t_i}\, ,
\label{short3}
\ee
where $\gamma_{ij}=\langle\gamma_i,\gamma_j\rangle$.
	
\paragraph{Setup 2.} 
The second case describes either the HK moduli space of 4d $N=2$ gauge theory on $\IR^3\times S^1$ \cite{Gaiotto:2008cd}
or the D-instanton corrected quaternion-K\"ahler (QK) hypermultiplet moduli space of type II string theory on a CY threefold
\cite{Alexandrov:2008gh,Alexandrov:2011va}. Although the two manifolds are different, 
their twistorial descriptions are identical due to the so-called QK/HK correspondence 
\cite{Haydys,Alexandrov:2011ac,Hitchin:2012vvn}\footnote{Importantly, the BPS spectra are very different in the two cases.
In particular, in string theory BPS indices grow exponentially with charges and spoil a convergence property imposed on them 
in \cite{Bridgeland:2016nqw} to have a well-defined RH problem. There are indications \cite{Pioline:2009ia}
that the convergence issue might be resolved by inclusion of NS5-brane instanton corrections, 
which are still missing in this setup but do affect the hypermultiplet moduli space 
\cite{Becker:1995kb,Alexandrov:2010ca,Alexandrov:2014rca,Alexandrov:2023hiv}.}
and is determined by the Darboux coordinates $\cX_\gamma(t)$.
These coordinates are restricted to satisfy the following reality condition
\be 
\overline{\cX_\gamma(-1/\bt)}=\cX_{-\gamma}(t),
\label{real-cX}
\ee 
while the asymptotic condition in this case requires that both at $t\to 0$ and $t\to \infty$ they reduce to
\be
\cXsf_\gamma(t)= \exp\[2\pi\I\(\(\bZ_\gamma\,t-\frac{Z_\gamma}{t}\)
+\Theta_\gamma\)\],
\label{defXsf}
\ee
up to a real $t$-independent factor.
Here $\Theta_\gamma=\langle \gamma,\Theta\rangle$ is real and $Z_\gamma$ is parameterized by $n$ complex parameters
corresponding to a Lagrangian subspace, determined by a holomorphic prepotential,
in the space of Bridgeland stability conditions parametrized by all $z^a$.
Together they provide $4n$ real coordinates of $\cM$.\footnote{In the QK case, the counting is a bit different. 
$Z_\gamma$ is parametrized by $n-1$ complex and one real parameter and the missing real parameter 
is recovered from the asymptotic condition on an additional Darboux coordinate which, in the absence of NS5-brane instantons, 
is uniquely determined by $\cX_\gamma$.}
The data for the TBA equation \eqref{TBAeq-sh} are given by
\be
\IS_i=\sum_{\gamma_i\in\Gamma_{\!\star} }
\frac{\sigma_{\gamma_i}\bOm(\gamma_i)}{2\pi\I} \int_{\ell_{\gamma_i}}\frac{\de t_i}{t_i}\, ,
\qquad
K_{ij}=\frac{\gamma_{ij}}{2}\,\frac{t_j +t_i}{t_j-t_i}\, .
\label{short1}
\ee	
It is important to note that there is a relation between this setup and the previous one.
First, the kernels \eqref{short3} and \eqref{short1} differ by a constant term:
$\hf\,\frac{t'+t}{t'-t}=\frac{t}{t'-t}+\frac12$, which can be absorbed by  
a redefinition of the coordinates $\Theta\mapsto \theta$.
Note that, due to the reality condition \eqref{real-cX}, the absorbed term is pure imaginary 
so that the new coordinates $\theta$ are no longer real.
Then \eqref{cXsf} is obtained from \eqref{defXsf} in the so-called conformal limit \cite{Gaiotto:2014bza}:
$t\to 0$ and $Z_\gamma\to 0$ keeping the ratio $Z_\gamma/t$ fixed.
Simultaneously, one should relax the restriction on $Z_\gamma$ to belong to the Lagrangian subspace.

\paragraph{Setup 3.} 
The third case is a reduction of the previous setup. It describes 
the large volume limit of D3-instanton corrections to the hypermultiplet moduli space in type IIB string theory on a CY.
We refer to \cite{Alexandrov:2012au} for the details of the corresponding construction.
Here we restrict ourselves to presenting the data for the TBA equation \eqref{TBAeq-sh}
\be
\IS_i=\sum_{\gamma_i\in\Gamma_{\! +} }
\frac{\sigma_{\gamma_i}\bOm(\gamma_i)}{2\pi\I}\int_{\ell_{\gamma_i}}\de t_i,
\qquad
K_{ij}
=\frac{\gamma_{ij}}{t_j-t_i}+\I \, (vp_ip_j)\, , 
\label{short2}
\ee
where $v^\alpha$ are the K\"ahler moduli of the CY threefold, $p^\alpha$ are D3-brane charges
(the full charge vector is $\gamma=(p^0=0,p^\alpha,q_\alpha,q_0)$), 
$\Gamma_+$ is a positive cone inside $\Gamma_{\!\star}$ defined by the condition $(pv^2)>0$, and we use the notation 
$(xyz)=\kappa_{\alpha\beta\gamma}x^\alpha y^\beta z^\gamma$ where $\kappa_{\alpha\beta\gamma}$ are the 
triple intersection numbers of the CY.

\bigskip

The TBA equation \eqref{TBAeq-sh} can be solved by iterations. 
This results in a formal expansion\footnote{In this paper, we ignore the convergence issue of 
the expansion \eqref{Hexpand} and similar expansions appearing below. Anyway, it strongly depends on 
the Setup under consideration and its concrete realization. 
As the example of the resolved conifold demonstrates \cite{Bridgeland:2017vbr,Alexandrov:2021prq},
the issue arises even for uncoupled BPS structures with infinite BPS spectrum and 
requires a proper prescription to make sense of the perturbative expansions.
Similarly, it is hard to make any concrete statement about the uniqueness of solution 
of the classical RH problem and its quantum counterpart discussed in the next section
(see a related discussion in \cite{Bridgeland:2016nqw}). Typically, the uniqueness may require imposing 
additional analyticity properties and is expected to be closely related to the existence 
of a natural prescription defining the perturbative solution. 
}
in powers of BPS indices given by \cite{Filippini:2014sza}
\be
\label{Hexpand}
\cX_0 = \cXsf_0 \, \sum_{n=0}^{\infty} \left(
\prod_{i=1}^n  \IS_i\,  \cXsf_i \right) \sum_{T\in \IT_{n+1}^{\rm r}}\,
\frac{\prod_{e\in E_T}K_{s(e)t(e)}}{|\Aut(T)|}\, ,
\ee
where $\IT_n^{\rm r}$ is the set of rooted trees with $n$ vertices, each vertex
of the tree is decorated with a charge $\gamma_i\in\Gamma_{\!\star} $, with  $\gamma_0$ associated
to the root vertex, $E_T$ is the set of edges of $T$, 
and $s(e)$, $t(e)$ denote the source and target vertices of edge $e$, respectively. 
More explicitly, we get at the first few orders
\bea
\label{Xexpandshort}
\cX_0 &= &  \cXsf_{0} + \IS_1 K_{01} \cXsf_{0}\cXsf_{1}
+  \IS_1 \IS_2 \left( \tfrac12\, K_{01} K_{02} + K_{01} K_{12} \right)
\cXsf_{0}\cXsf_{1}\cXsf_{2}
\\
& + &  \IS_1\IS_2\IS_3\left( \tfrac16\, K_{01} K_{02} K_{03}
+ \tfrac12\, K_{01} K_{12} K_{13} + K_{01} K_{02} K_{13} + K_{01} K_{12} K_{23} \right)
\cXsf_{0}\cXsf_{1}\cXsf_{2}\cXsf_{3}+ \cdots .
\nn
\eea

Although this will not be needed in this work, it is worth mentioning that the function on $\cM$
defined by 
\be 
W= \IS_1 \cX_1\( 1-\hf\, \IS_2 K_{12} \cX_2\)
\label{Wfull-short}
\ee 
in Setup 1 was shown \cite{Alexandrov:2021wxu} to coincide with the Plebanski potential introduced in \cite{Bridgeland:2020zjh}
that satisfies the Pleba\'nski’s heavenly equation, while in Setup 3 it is identical to the instanton generating potential 
introduced in \cite{Alexandrov:2018lgp} that played a crucial role in establishing modular properties of the generating functions 
of D4-D2-D0 BPS indices in type IIA string theory on a CY threefold (see \cite{Alexandrov:2025sig} for a recent review).

\section{Refinement and non-commutativity}
\label{sec-ref}

\subsection{Refined BPS indices and quantum RH problem}
\label{subsec-refBPS}

As discussed in the Introduction, refined BPS indices $\Omega(\gamma,y)$ are Poincar\'e-Laurent polynomials 
in the refinement parameter $y$ reducing at $y=1$ to the ordinary BPS indices. 
Switching on $y$ induces a non-commutative structure \cite{ks,Gaiotto:2010be,Dimofte:2009bv,Cecotti:2014wea,Alexandrov:2019rth} 
which can be formalized as a quantum RH problem. 

To formulate it, let us split the functions $\cXsf_\gamma(t)$, capturing the geometry of $\cM$ at vanishing BPS indices 
when it is given by a semi-flat torus fibration, into two parts: $\cXsf_\gamma(t)=\cXb_\gamma(t) x_\gamma$ where
$x_\gamma$ is a $t$-independent part that contains all dependence on the angular variables parametrizing the torus, 
while the remainder $\cXb_\gamma(t)$ depends on $t$ and coordinates on the base of the fibration. 
For example, in Setup 1 of the previous section,
$x_\gamma=e^{2\pi\I\theta_{\gamma}}$ and $\cXb_\gamma(t)=e^{-2\pi \I Z_\gamma/t}$.
Then, following \cite{Barbieri:2019yya}, let $\hx_\gamma$ be generators 
of the quantum torus algebra $\IC_y[\IT]$
\be 
\hx_\gamma\hx_{\gamma'}=y^{\langle\gamma,\gamma'\rangle}\,\hx_{\gamma+\gamma'}
\label{qtorus-alg}
\ee 
and $\hPhisf(t)$ are automorphisms of this algebra defined by the functions $\cXb_\gamma(t)$ we have just introduced
\be 
\hPhisf(t)\ :\ \hx_\gamma\mapsto \cXb_\gamma(t) \hx_\gamma.
\label{aut-sf}
\ee 
Another set of automorphisms is given by the operators representing the quantum KS transformations
\be
\hat\scU_\gamma=\Ad_{\hcU_\gamma},
\qquad
\hcU_\gamma=\prod_{n\in\IZ} E_y\(y^n \sigma_\gamma \hx_{\gamma}\)^{\Omega_{n}(\gamma)},
\label{hviaU}
\ee
where $E_y(x)$ is the standard (sometimes called compact) quantum dilogarithm\footnote{There are different
	conventions in the literature,  e.g.  \cite{Barbieri:2019yya,Chuang:2022uey}
	define the quantum dilogarithm as $\mathbb{E}_q(x)=\prod_{k=0}^{\infty}(1-q^k x)$,
	which is related to our definition by $E_y(x)=[\mathbb{E}_{y^2}(xy)]^{-1}$.}
\be
E_y(x) := \exp\left[ \sum_{k=1}^{\infty} \frac{(x y)^k}{k(1-y^{2k})} \right] 
= \prod_{k=0}^{\infty}(1-x y^{2k+1})^{-1}, 
\qquad
x,y\in\IC,\ |y|<1,
\label{defEy}
\ee
and $\Omega_n(\gamma)$ are the Laurent coefficients of the refined BPS indices
\be
\Omega(\gamma,y)=\sum_{n\in \IZ}\Omega_n(\gamma)\,y^n.
\label{LaurentOm}
\ee
Then we ask:

{\bf qRH problem:} {\it  Find a piece-wise holomorphic function $\hPhi(t)$ valued in operators
	on the quantum torus algebra $\IC_y[\IT]$ such that
	\begin{enumerate}
		\item 
		for all $t\in\IP^1$, $\hPhi(t)\in \Aut \IC_y[\IT]$;
		\item 
		its leading behavior at $t\to 0$ and $t\to \infty$, as in the classical RH problem, 
		is captured by the automorphism $\hPhisf(t)$ \eqref{aut-sf};
		\item 
it jumps across active BPS rays in such a way that, if  
$\hPhi^\pm$ are values on the clockwise and anticlockwise sides of $\ell$, respectively,
then they are related by the quantum KS automorphism 
\be 
\hPhi^-=\hPhi^+ \circ\prod_{\gamma'\in\Gamma_\ell}\hat\scU_{\gamma'}.
\label{qKSjump}
\ee 
	\end{enumerate} 
}

To make this quantum RH problem closer to the formulation of its classical counterpart presented in the previous section, 
let us first simplify the product of 
the operators $\hcU_\gamma$ over colinear charges, which all correspond to the same BPS ray and hence all appear in the product
over charges defining the KS automorphism \eqref{qKSjump}. This can be done by using the rational refined BPS indices \cite{Manschot:2010qz}
\be
\label{defbOm}
\bOm(\gamma,y) = \sum_{d|\gamma} \frac{y-y^{-1}}{d(y^d-y^{-d})}\, \Omega(\gamma/d,y^d) ,
\ee
which reduce to \eqref{def-bOm} at $y=1$ and are known to simplify the refined wall-crossing relations 
\cite{ks,Manschot:2010qz,Alexandrov:2018iao}.
Substituting the first representation in \eqref{defEy} of the quantum dilogarithm into the definition \eqref{hviaU},
one finds 
\be 
\begin{split} 
\prod_{m=1}^\infty \hcU_{m\gamma}=&\, 
\exp\[-\sum_{m=1}^\infty \sum_{k=1}^\infty  \frac{\Omega(m\gamma,y^k)\, (\sigma_\gamma \hx_\gamma)^{mk}}{k(y^k-y^{-k})}\]
\\
=&\, \exp\[-\sum_{m=1}^\infty\sum_{d|\gamma}  \frac{\Omega(m\gamma/d,y^d)\, \sigma_{m\gamma} \hx_{m\gamma}}{d(y^d-y^{-d})}\]
=\prod_{m=1}^\infty \hbcU_{m\gamma}(\hx),
\end{split} 
\label{prod-hviaU}
\ee
where
\be 
\hbcU_\gamma(\hx)=\exp\(-\frac{\bOm(\gamma,y)\sigma_\gamma \hx_\gamma}{y-1/y}\).
\ee

Next, we define the quantum analogues of the Darboux coordinates $\cX_\gamma$,
\be 
\hcX_\gamma(t)=\hPhi(t)\[\hx_\gamma\],
\label{def-hcX}
\ee 
which satisfy the same commutation relation \eqref{qtorus-alg} as $\hx_\gamma$ since $\hPhi(t)$ is an automorphism.
In terms of these variables and taking into account \eqref{prod-hviaU}, 
the wall-crossing relation \eqref{qKSjump} takes the form
\be 
\hcX_\gamma^-=\(\prod_{\gamma'\in\Gamma_\ell}\Ad_{\hbcU_{\gamma'}(\hcX)} \)\hcX_\gamma^+\, .
\label{qKSjump1}
\ee 
Furthermore, using the Campbell identity $Ad_{e^X}=e^{\ad_X}$, it is easy to see that 
\be 
\Ad_{\hbcU_{\gamma'}(\hcX)} \hcX_\gamma=\sum_{k=0}^\infty 
\frac{1}{k!}\Bigl(\sigma_{\gamma'}\bOm(\gamma',y)\kappa(\langle\gamma,\gamma'\rangle )\Bigr)^k\hcX_{\gamma+k\gamma'},
\label{qKSjump2}
\ee 
where $\kappa(x)=\frac{y^{x}-y^{-x}}{y-y^{-1}}$.
This allows to rewrite the r.h.s. of \eqref{qKSjump1} as a simple multiplication from the left of $\hcX_\gamma^+$ by
\be 
\hUsf_{\ell,\gamma}=\prod_{\gamma'\in\Gamma_\ell}
\exp\Bigl(\sigma_{\gamma'}\bOm(\gamma',y)\,\kappa(\langle\gamma,\gamma'\rangle)\, 
y^{\langle\gamma,\gamma'\rangle}\hcX_{\gamma'} \Bigr).
\label{defUKS}
\ee 
As a result, the above quantum RH problem can be reformulated as follows: 

{\bf qRH problem:} {\it  Find functions $\hcX_\gamma(t)$ valued in $\IC_y[\IT]$ such that
	\begin{enumerate}
		\item 
		they satisfy
\be 
\hcX_\gamma\hcX_{\gamma'}=y^{\langle\gamma,\gamma'\rangle}\,\hcX_{\gamma+\gamma'};
\label{qtorus-algX}
\ee 
		\item 
		their leading behavior at $t\to 0$ and $t\to \infty$ is captured by $\cXb_\gamma(t) \hx_\gamma$;
		\item 
		they jump across active BPS rays in such a way that, if  
		$\hcX_\gamma^\pm$ are values on the clockwise and anticlockwise sides of $\ell$, respectively,
		then they are related by
		\be 
		\hcX_\gamma^-=\hUsf_{\ell,\gamma} \hcX_\gamma^+ .
		\label{qKSjump-X}
		\ee 
	\end{enumerate} 
}

\subsection{Star product}
\label{subsec-star}

Let us restrict for simplicity to Setup 1 of section \ref{sec-TBA}. 
We will return to other Setups in section \ref{subsec-other}.
A convenient way to realize the quantum torus algebra \eqref{qtorus-alg} is to use the non-commutative Moyal star product. 
Representing the refinement parameter as  $y=e^{2\pi\I\alpha}$,
for any two functions $f,g$ on $\cZ=\IP^1\times \cM$, we define
\be
f \star g =  f \exp\[ \frac{\alpha}{2\pi\I}\sum_{a,b}\omega_{ab}\,
\overleftarrow{\p}_{\!\theta^a}\overrightarrow{\p}_{\!\theta^b} \] g.
\label{starproduct}
\ee
It is easy to check that $x_\gamma=e^{2\pi\I\theta_{\gamma}}$ introduced in the previous subsection satisfy
the algebra \eqref{qtorus-alg} with respect to the star product
\be 
x_\gamma\star x_{\gamma'}=y^{\langle\gamma,\gamma'\rangle}\, x_{\gamma+\gamma'}
\label{qtorus-alg-x}
\ee
and thus can be identified with the generators $\hx_\gamma$. 
This identification 
\be 
\iota\ :\ \hx_\gamma\mapsto x_\gamma
\label{iota}
\ee 
allows us to represent all functions valued in $\IC_y[\IT]$ that appeared in the previous subsection
by usual complex functions provided all products are replaced by the star product.
In particular, we define 
\be 
\begin{split} 
\cXz_\gamma(t)=&\,\Phisf(t)[\hx_\gamma],
\qquad
\Phisf(t)= \iota\circ \hPhisf(t),
\\
\hcXr_\gamma(t)=&\,\Phir(t)[\hx_\gamma],
\qquad \qquad
\Phir(t)= \iota\circ \hPhi(t).
\end{split} 
\label{hom-sf}
\ee 
It is clear that in our case $\cXz_\gamma=\cXsf_\gamma$. 
As we will see in section \ref{subsec-other}, this relation will be slightly modified in Setup 3.
Note that the star product can be evaluated also for functions at different parameters $t$.
In particular, one finds
\be
\cXz_{\gamma}(t)\star\cXz_{\gamma'}(t') =
y^{\langle\gamma,\gamma'\rangle}\cXz_{\gamma}(t)\cXz_{\gamma'}(t').
\label{starXX}
\ee

Now we can give one more reformulation of the quantum RH problem in terms of the usual functions:

{\bf qRH problem:} {\it  Find functions $\hcXr_\gamma(t)$ such that
	\begin{enumerate}
		\item 
		they satisfy
		\be 
		\hcXr_\gamma(t)\star \hcXr_{\gamma'}(t)=y^{\langle\gamma,\gamma'\rangle}\,\hcXr_{\gamma+\gamma'}(t);
		\label{starXXh}
		\ee 
		\item 
		at small $t$ they are given by $\cXz_\gamma(t)$;
		\item 
		they jump across active BPS rays in such a way that, if  
		$\hcXrpm_\gamma$ are values on the clockwise and anticlockwise sides of $\ell$, respectively,
		then they are related by
		\be \,
		\hcXrm_\gamma=\biggl[\prodi{\gamma'\in\Gamma_\ell}{}_{\,\star}
		\exp_\star\Bigl(\sigma_{\gamma'}\bOm(\gamma',y)\,\kappa(\langle\gamma,\gamma'\rangle)\, 
		y^{\langle\gamma,\gamma'\rangle}\hcXr_{\gamma'} \Bigr)\,\biggr]\star  \hcXrp_\gamma,
		\label{qKSjump-Xstar}
		\ee 
		where $\prod_\star$ denotes the star product and $\exp_\star(x)=\sum_{n=0}^\infty x\star \cdots \star x/n! $.
	\end{enumerate} 
}
\noindent
Two comments are in order:
\begin{itemize}
	\item 
	Note that, whereas the commutation relation \eqref{starXX} holds for arbitrary $t$ and $t'$,
	the relation \eqref{starXXh} is imposed only at equal $t$'s.
	\item 
	Since the charges belonging to $\Gamma_\ell$ are mutually local, in \eqref{qKSjump-Xstar}
	all products inside the square brackets can be replaced by the ordinary ones 
	and $\exp_\star(x)$ reduces to the ordinary exponential.
	However, in Setup 3 considered in section \ref{subsec-other}, this replacement leads to additional
	$y$-dependent factors.
\end{itemize}

\subsection{Refined variables}
\label{subsec-refvar}

Having mapped the quantum RH problem to the space of usual functions, 
one can ask whether there is a quantum version of the TBA-like equations 
\eqref{TBAeq-sh} whose solutions coincide with the functions $\hcXr_\gamma(t)$.
Let us postpone the answer to this question to the next subsection and instead consider 
the following system of integral equations, first proposed in \cite{Alexandrov:2019rth}
and used in this context in \cite{Alexandrov:2021wxu}, 
\be
\cXr_0 =\cXz_0\star\(1+\IS_1 \Kr_{01}\cXr_1\),
\label{inteqH-star}
\ee
where the symbol \tIS{i} is the same as in section \ref{sec-TBA}, just with $\bOm(\gamma_i)$
replaced by $\bOm(\gamma_i,y)$, whereas the integration kernel is
taken to be  
\be
\Kr_{ij}=\frac{\ker_{ij}}{y-1/y}\, , 
\qquad
\ker_{ij}=\frac{t_i t_j}{t_j-t_i}\, .
\label{defkerref}
\ee
A nice feature of these equations is that they involve the star product.
On the other hand, they do not look like a deformation of the usual TBA.
Moreover, it is clear that the functions $\cXr_\gamma$ cannot solve the quantum RH problem simply because neither 
the equations \eqref{inteqH-star} nor the functions have a well-defined unrefined limit due to the pole
of the kernel $\Kr_{ij}$ at $y=1$.
Besides, instead of the standard commutation relation \eqref{starXXh}, they satisfy 
\be  
\cXr_{\gamma+\gamma'}(t)=y^{-\langle\gamma,\gamma'\rangle}\,\cXz_{\gamma}(t)\star\cXr_{\gamma'}(t),
\label{comm-Xr}
\ee
which trivially follows from the defining equation \eqref{inteqH-star}.

Nevertheless, the functions $\cXr_\gamma$ turn out to possess some magic: if one solves \eqref{inteqH-star} 
by the usual iterative procedure producing a formal power series in refined rational BPS indices, which takes the form
\be  
\cXr_0=\cXz_0\sum_{n=0}^\infty \(\prod_{k=1}^n\IS_k \Kr_{k-1,k}\,\cXz_k\) y^{\sum_{j>i=0}^n\gamma_{ij}},
\label{expXref}
\ee 
and substitutes this solution into the potential
\be 
\Wr = \IS_1\, \cXr_1,
\label{Wref}
\ee
then it can be shown that \cite{Alexandrov:2019rth,Alexandrov:2021wxu}
\begin{itemize}
	\item 
	$\Wr$ does have the unrefined limit where it reproduces the potential $W$ \eqref{Wfull-short};
	\item 
	it solves the $\star$-deformed version of the Pleba\'nski’s heavenly equation 
	introduced in \cite{Strachan:1992em,Takasaki:1992jf,Bridgeland:2020zjh};
	\item 
	an analogous construction in Setup 3 reproduces a potential transforming as a Jacobi form that captures
	the modular properties of the refined D4-D2-D0 BPS indices.
\end{itemize}
These observations suggest that the functions $\cXr_\gamma$ cannot be completely irrelevant.
Note also the simplicity of the above equations compared to the unrefined case:
i) in contrast to \eqref{Hexpand},  the iterative solution \eqref{expXref} does not involve any sum over trees 
and the effect of the Moyal product is completely captured by the last $y$-dependent factor;
ii) the potential $\Wr$ \eqref{Wref} is given by a single integral, whereas its unrefined version \eqref{Wfull-short}
involves a double integral term.

Before we proceed, we have to resolve an ambiguity appearing in the perturbative solution \eqref{expXref}.
It arises every time when $\arg (Z_{\gamma_{k-1}})=\arg (Z_{\gamma_k})$. In such case the BPS rays $\ell_{\gamma_{k-1}}$
and $\ell_{\gamma_k}$ overlap so that the integration contour for $t_k$ goes through the pole of $\Kr_{k-1,k}$.
Note that this problem does not arise in the unrefined case because, at generic point of the moduli space, 
the phases of the central charges align only for charges with 
vanishing pairing $\langle\gamma_{k-1},\gamma_k\rangle$ and all kernels $K_{ij}$ defined in section \ref{sec-TBA}
have residues at the poles proportional to $\gamma_{ij}$.
To resolve the ambiguity, we postulate\footnote{In fact, the prescription \eqref{sym-prescr}
	follows from the derivation of the potential $\Wr$, 
	playing the role of the refined version of the instanton generating potential in Setup 3,
	from a modular generalized theta series expansion \cite{Alexandrov:2019rth},
	and it is also implicitly used in the derivation of the properties of $\Wr$, mentioned below 
	\eqref{Wref}, in Setup 1 \cite{Alexandrov:2021wxu} where symmetrization plays a crucial role. 
	Any other prescription would spoil these properties 
	as well as the solution of the quantum RH problem presented below.}
that the r.h.s. of \eqref{inteqH-star} at $t_0\in\ell_{\gamma_0}$
is defined by means of a symmetric prescription, i.e. for $\gamma_1\in \Gamma_0 \equiv \Gamma_{\ell_{\gamma_0}}$
we take
\be  
\int_{\ell_{\gamma_1}}=\hf\,\lim_{\eps\to 0}\(\int_{e^{\I\eps}\ell_{\gamma_0}}+\int_{e^{-\I\eps}\ell_{\gamma_0}}\).
\label{sym-prescr}
\ee 
Then a similar symmetric prescription should be used in \eqref{expXref}.
The only difference is that now any number of consecutive integrals can be along the same contour.
In such case one should first regularize by moving all contours apart from each other and 
then symmetrize over their relative ordering in the angular direction. 

This symmetric prescription plays an important role in the derivation of the wall-crossing properties 
of  the functions $\cXr_\gamma$. It is done in appendix \ref{ap-wallcros}
and the result is given by the following formula
\be 
\cXrm_\gamma=\Usf_\ell\(y^{2\langle\gamma,\gamma'\rangle}\cXz_{\gamma'}\)\star \cXrp_\gamma,
\qquad
\Usf_\ell(x_{\gamma'})=\prodi{\gamma'\in\Gamma_\ell}{}_{\,\star}
\exp_\star\(\frac{\sigma_{\gamma'}\bOm(\gamma',y)}{y-1/y}\,x_{\gamma'} \) .
\label{qKSjump-Xr}
\ee 
Although it looks similar to \eqref{qKSjump-Xstar}, the crucial difference is that the exponential 
involves $\cXz_{\gamma'}$ and not the refined variable.
Thus, the commutation relation \eqref{comm-Xr} and the wall-crossing relation \eqref{qKSjump-Xr} 
both indicate that $\cXr_\gamma$ is in a sense `a half' of the solution of the quantum RH problem.
The full solution will be constructed in the next subsection.

\subsection{Solution}
\label{subsec-new}

We construct a solution of the quantum RH problem from the solution of the 
integral equation \eqref{inteqH-star} introduced above.
Let us define the function
\be 
\psi(t_0) =1+\IS_1\,\Kr_{01} \cXr_1
\label{Jref}
\ee 
that enters that integral equation. Note that it is independent of any charge.
Then we introduce a new set of functions:
\be 
\label{hcXranz}
\begin{split} 
	\hcXr_0=&\, \psi(t_0)_\star^{-1}\star \cXr_0
	\\
	=&\, \psi(t_0)_\star^{-1}\star \cXz_0 \star \psi(t_0),
\end{split}
\ee 
where the star index means that the inverse should be evaluated using the star product as 
$(1+x)^{-1}_\star=\sum_{n=0}^\infty(-1)^n x\star \cdots \star x $.
We claim that these functions solve the quantum RH problem defined above.

To show that this indeed the case, we have to check the three conditions imposed on $\hcX_\gamma$.
The first is the product relation \eqref{starXXh}.
It is trivial to check that it does hold:
\be  
\label{hcXrcomm}
\begin{split} 
	\hcXr_{\gamma}(t) \star \hcXr_{\gamma'}(t) =&\, 
	\psi_\star^{-1}\star \cXz_{\gamma}(t)\star \cXz_{\gamma'}(t) \star \psi
	\\
	=&\,
	y^{\langle\gamma,\gamma'\rangle}\psi_\star^{-1}\star \cXz_{\gamma+\gamma'}(t) \star \psi
	\\
	=&\, y^{\langle\gamma,\gamma'\rangle}\hcXr_{\gamma+\gamma'}(t).
\end{split} 
\ee 
The second condition is also obvious since at $t_0=0$ the kernel $\Kr_{01}$ \eqref{defkerref} vanishes 
and hence the function $\psi$ reduces to 1,
while they both become $t_0$-independent at large $t_0$.
Thus, it remains to prove only the wall-crossing relation \eqref{qKSjump-Xstar}.

To this end, note that due to \eqref{inteqH-star}, we have $\psi=\cXsf_{-\gamma}\star \cXr_\gamma$.
Therefore, the wall-crossing formula for $\cXr_\gamma$ \eqref{qKSjump-Xr} implies a similar relation for $\psi$:
\be 
\psi^-=\Usf_\ell(\cXz_{\gamma'})\star  \psi^+,
\label{qKSjump-psi}
\ee 
where the power of $y$ disappeared from the argument of $\Usf_\ell$ due to the commutation with $\cXsf_{-\gamma}$.
This result makes it straightforward to compute the effect of wall-crossing on $\hcXr_\gamma$.
From \eqref{hcXranz}, one finds
\be 
\begin{split}
\hcXrm_\gamma=&\,
(\psi^+)^{-1}\star \Usf_\ell^{-1}(\cXz_{\gamma'})\star \cXz_\gamma\star\Usf_\ell(\cXz_{\gamma'})\star \psi^+
\\
=&\, \Usf_\ell^{-1}(\hcXr_{\gamma'})\star \hcXrp_\gamma\star\Usf_\ell(\hcXr_{\gamma'}).
\end{split} 
\label{qKSjump-hXr}
\ee 
This result already reproduces the wall-crossing relation \eqref{qKSjump1} after applying the map $\iota$.
Performing the same manipulations as below that formula, one arrives to the wall-crossing relation in the form \eqref{qKSjump-Xstar}. 
This completes the proof that $\hcXr_\gamma(t)$ defined by \eqref{hcXranz} is a solution of the quantum RH problem.

\subsubsection{Perturbative expansion}
\label{subsec-qpert}

In principle, it is straightforward to generate a perturbative expansion for $\hcXr_\gamma$.
However, its form following directly from \eqref{hcXranz}
does not look particularly nice (see \eqref{expinsf}).
Nevertheless, in appendix \ref{ap-pert} we prove that the following key relation satisfied by the kernel \eqref{defkerref}
\be 
\label{cyc3}
\ker_{ij}\ker_{ik}=\ker_{ij}\ker_{jk}+\ker_{ik}\ker_{kj}
\ee 
allows us to rewrite it as
\be
\label{linhHr-comm}
\hcXr_0=\cXz_0+
\sum_{n=1}^\infty \[\prod_{k=1}^n \IS_k\ker_{k-1,k}\] 
\left\{\cdots\left\{\left\{\cXz_0,\cXz_1\right\}_\star,\cXz_2\right\}_\star\cdots ,\cXz_n\right\}_\star,
\ee
where 
\be
\{f,g\}_\star=\frac{f \star g-g\star f}{y-1/y}
\label{Mbracket}
\ee
and, as in \eqref{expXref},
the symmetric prescription explained in subsection \ref{subsec-refvar} is implied.
Since the star product of any functions $\cXz_\gamma$ at different arguments $t$ is known \eqref{starXX}, 
one can explicitly evaluate all commutators, which gives
\be  
\hcXr_0= \cXz_0\,\sum_{n=0}^\infty \prod_{j=1}^n \[\IS_j \kappa\(\sum_{i=0}^{j-1}\gamma_{ij}\) 
\ker_{j-1,j}\cXz_j\],
\label{compact-hcXr}
\ee 
where $\kappa(x)$ is defined below \eqref{qKSjump2}.

This representation immediately ensures that $\hcXr_\gamma$ has a well-defined unrefined limit.
Indeed, since $\lim_{y\to 1}\kappa(x)=x$, one gets
\be  
\mathop{\lim}\limits_{y\to 1} \hcXr_0= \cXz_0\,\sum_{n=0}^{\infty}\prod_{j=1}^n \[\IS_j \(\sum_{i=0}^{j-1}\gamma_{ij}\) 
\ker_{j-1,j}\cXz_j\] .
\label{compact-hcXr-lim}
\ee   
A remarkable fact, which we prove in appendix \ref{ap-unref}, is that the expansion 
\eqref{compact-hcXr-lim} is actually identical to the one in \eqref{Hexpand}.
On one hand, this proves that the unrefined limit of $\hcXr_\gamma$ coincides with the solution of 
the classical RH problem $\cX_\gamma$, as expected from a solution of the quantum RH problem.
On the other hand, this provides an alternative representation for $\cX_\gamma$ which, in contrast to \eqref{Hexpand},
involves only trees of the trivial topology --- linear trees for which every vertex has only one child.

\subsubsection{Reconstructing integral equations}
\label{subsec-newinteq}

The above results show that the functions $\hcXr_\gamma$ are the refined versions of the Darboux coordinates $\cX_\gamma$.
However, in contrast to the latter, they are defined in two steps: first one introduces $\cXr_\gamma$ through 
the somewhat unusual integral equation \eqref{inteqH-star} and only then one defines $\hcXr_\gamma$ by means of \eqref{hcXranz}.
It is natural to ask whether one can avoid the intermediate step and introduce $\hcXr_\gamma$ directly as 
solutions of some new integral equation.

A natural step to eliminate $\cXr_\gamma$ is to invert the relation \eqref{hcXranz}.
This can be done perturbatively in $\hcXr_\gamma$ and produces the following expansion
\be
\label{cXr-hcXr}
\cXr_0=\sum_{n=0}^\infty \[\prodi{k=1}{n}_\star \IS_k \Kr_{k-1,k}\hcXr_k  \] \star \hcXr_0,
\ee 
where we introduced the star products ordered in the descending and ascending order of the index:
\be
\prodi{k=1}{n}_\star x_k=x_n\star \cdots\star x_1,
\qquad
\prodi{k=1}{n}^\star x_k=x_1\star \cdots\star x_n.
\label{defprodstar}
\ee 
This implies
\be 
\psi(t_0) =\sum_{n=0}^\infty \[\prodi{k=1}{n}_\star \IS_k \Kr_{k-1,k}\hcXr_k\] .
\label{Jref-hX}
\ee 
Combining \eqref{Jref-hX} with the second line in \eqref{hcXranz}, one finds the following equation
\be 
\sum_{n=0}^\infty \[\prodi{k=1}{n}_\star\IS_k \Kr_{k-1,k}\hcXr_k\] \star\hcXr_0=
\cXz_0\star \sum_{n=0}^\infty \[\prodi{k=1}{n}_\star\IS_k \Kr_{k-1,k}\hcXr_k \].
\label{expeq}
\ee 
Unfortunately, this equation involves infinitely many terms of increasing order and, despite our efforts,
we have not been able to find a simple function that could produce it after a perturbative expansion.

It is possible to write down another integral equation, which also has the form of an expansion:
\be
\label{intgeqn-comm}
\hcXr_0 = \cXz_0 + \sum_{n=1}^{\infty}\[\prod_{k=1}^n\IS_k \ker_{k-1,k}\]
\left\{\cdots\left\{\left\{\cXz_0,\hcXr_n\right\}_\star,\hcXr_{n-1}\right\}_\star\cdots ,\hcXr_1\right\}_\star .
\ee
Its proof is more non-trivial and we relegate it to appendix \ref{ap-eq}.
Its advantage is that it makes manifest that $\hcXr_\gamma$ has a well defined unrefined limit
because the star bracket \eqref{Mbracket} then reduces to the ordinary Poisson bracket.
Note a difference with \eqref{linhHr-comm}: whereas there one starts by commuting $\cXz_0$ with $\cXz_1$
and continues up to $\cXz_n$, here one proceeds in the opposite order.

\subsection{Other Setups}
\label{subsec-other}

\paragraph{Setup 2.} 
In this case, the star product is defined as in \eqref{starproduct} with $\theta$ replaced by $\Theta$.
In particular, $\cXz_\gamma=\cXsf_\gamma$ satisfies the same 
relation \eqref{starXX}. The difference is hidden in the form of the kernel.
Its unrefined version \eqref{short1} suggests that, whereas $\Kr_{ij}$ can still be written as 
in \eqref{defkerref}, the most natural choice for the function $\ker_{ij}$ is given by $\ker_{ij}=\hf\,\frac{t_j+ t_i}{t_j-t_i}$.
Then the construction of the functions $\cXr_\gamma$ and $\hcXr_\gamma$ as well as the proof that 
the latter satisfy the first and third conditions of the quantum RH problem still go through without any changes.

The second condition is more involved because, to decide whether it is satisfied or not, we should first 
formulate it properly in the presence of the refinement parameter. Or we can simply find the conditions satisfied 
by the functions $\hcXr_\gamma$. It is easy to check that both $\cXr_\gamma$ and $\hcXr_\gamma$
are consistent with the reality condition similar to \eqref{real-cX} provided $y\in \IR$ or, more generally, 
\be 
\overline{\cXr_\gamma(-1/\bt;\by)}=\cXr_{-\gamma}(t;y),
\qquad
\overline{\hcXr_\gamma(-1/\bt;\by)}=\hcXr_{-\gamma}(t;y).
\label{real-cXr}
\ee 
On the other hand, the asymptotic conditions at $t\to 0$ and $t\to \infty$ 
satisfied by $\hcXr_\gamma$ appear to be significantly modified by the $y$-dependence.
In this limit, one finds 
\be 
\hcXr_\gamma(t)\approx \(1\pm \frac{\Wr}{2(y-1/y)}\)_\star^{-1}\star \cXz_\gamma(t)\star \(1\pm \frac{\Wr}{2(y-1/y)}\),
\label{asym-s2}
\ee 
where $\Wr$ is defined in \eqref{Wref} and the two signs correspond to $t\to 0$ and $\infty$, respectively.
For real $y$, $\Wr\in \I\IR$ so that the factors in the round brackets are not real, but mapped to each other
by the complex conjugation.
Thus, the asymptotics of $\hcXr_\gamma$ differs from $\cXz_\gamma$ not by a real factor but 
by the adjoint action of an operator with non-trivial reality properties.

A crucial difference compared to Setup 1 is that the functions $\hcXr_\gamma$
defined by \eqref{hcXranz} do {\it not} admit a smooth unrefined limit $y\to 1$!
The reason for this is that the relevant kernel $\ker_{ij}$
does {\it not} satisfy the identity \eqref{cyc3}. Instead, it fulfills
\be 
\ker_{ij}\ker_{ik}=\ker_{ij}\ker_{jk}+\ker_{ik}\ker_{kj}+\frac14\, .
\ee 
The additional constant term has drastic consequences because it spoils the derivation of the representation
\eqref{linhHr-comm} for the perturbative expansion and hence all subsequent results, including the existence 
of the unrefined limit.

Of course, this fact calls for looking for another solution that would have such a limit.
The key is the observation that, until one fixes the asymptotic conditions on $\hcXr_\gamma$, 
the above construction has a huge ambiguity obtained by the replacement of $\cXz_\gamma$ by\footnote{Note that this ambiguity 
	is absent in Setup 1 because it would change the asymptotic form of $\hcXr_\gamma(t)$ at small $t$
	which is fixed to be exactly $\cXz_\gamma(t)$.} 
\be 
\tcXz_\gamma(t) =\cF_\star^{-1}[\cXz_\gamma]\star\cXz_\gamma(t) \star \cF[\cXz_\gamma], 
\ee 
where $\cF[\cXz_\gamma]$ is a $t$-independent functional 
and the star index means that the inverse is obtained using the star product. Indeed, one can check that it 
does not affect the proof of the first and third conditions of the quantum RH problem
because $\tcXz_\gamma$ satisfies the same commutation relation \eqref{starXX} as $\cXz_\gamma$.
Furthermore, the reality property \eqref{real-cXr} also continues to hold provided $\cF[\cXz_\gamma]$ is real, 
while in the asymptotic expression \eqref{asym-s2} the same replacement $\cXz_\gamma\to\tcXz_\gamma$ should be made. 
It turns out that, imposing the existence of the unrefined limit for $\hcXr_\gamma$, one can fix 
the functional $\cF[\cXz_\gamma]$ order by order. We have done this calculation up to the sixth order  
and, provided the only allowed explicit $y$-dependence is through the factors $1/(y-y^{-1})$ associated to each integral, 
the result appears to be unique!
Furthermore, it miraculously recombines into a function of integrated $\cXr_\gamma$, namely
\be 
\cF[\cXz_\gamma]=\cC_\star\(\IS_1\,\frac{\cXr_1}{2(y-1/y)}\), 
\qquad
\cC(x)=1+\hf\,x^2+\frac38\, x^4+\frac{5}{16}\, x^6+\cdots,
\label{resFC}
\ee 
where the star index means that in the Fourier expansion of $\cC(x)$ all powers are evaluated using the star product. 
It is interesting that the first terms in the expansion of $\cC(x)$ coincide with the expansion of $(1-x^2)^{-1/2}$.
Therefore, it is tempting to suggest that these two functions are actually the same.
If this is true, one obtains an all order construction of a solution of the RH problem 
(with the asymptotic conditions modified by the refinement) that does have the unrefined limit.

However, it turns out that the unrefined limit of so defined $\hcXr_\gamma$ does {\it not} coincide with
the solution of the classical RH problem $\cX_\gamma$. The difference arises already at second order
and is actually expected due to the modified asymptotic conditions encoded by \eqref{asym-s2}.
Thus, we have not been able to produce a solution of the quantum RH problem that would reduce to $\cX_\gamma$ 
at $y\to 1$ in the second Setup.\footnote{In fact, one can argue that this is likely impossible. 
	The argument relies on the observation that, if one follows the same rules of the game as above, 
	the only way to cancel singularities in the unrefined limit is to obtain nested commutators as in \eqref{linhHr-comm}. 
	They necessarily lead to the nested structure of the skew-symmetric products $\gamma_{ij}$ as in \eqref{compact-hcXr-lim}.
	On the other hand, already the third term in the expansion \eqref{Xexpandshort} of $\cX_\gamma$ 
	with the kernel given in \eqref{short1} spoils this structure.}

\paragraph{Setup 3.} 
In this Setup the appropriate star product has been defined in \cite{Alexandrov:2019rth}.
It uses the representation of the refinement parameter  $y=e^{2\pi\I(\alpha-\tau\beta)}$ with $\alpha,\beta\in\IR$ and 
$\tau$ the axio-dilaton field, and involves three additional vector valued fields: 
RR-axions $\tc_\alpha$ and $c^\alpha$ coupled to D3 and D1-branes, respectively, and the periods of the B-field $b^\alpha$.
Then instead of \eqref{starproduct}, one takes
\be
f \star g =  f \exp\[ \frac{1}{2\pi\I}\( \overleftarrow{\scD}_{\!\!\alpha}\overrightarrow{\p}_{\!\tc_\alpha}-
\overleftarrow{\p}_{\!\tc_\alpha} \overrightarrow{\scD}_{\!\! \alpha}\) \] g,
\label{starproduct-alt}
\ee
where $\scD_\alpha=\alpha\p_{c^\alpha}+\beta\p_{b^\alpha}$.
The advantage of this definition is that it preserves modular invariance which is an isometry
of the hypermultiplet moduli space originating from S-duality of type IIB string theory.\footnote{Modular transformations 
	of physical fields should be supplemented by a transformation of the refinement parameter. Namely $(\alpha,\beta)$ 
	should transform as a doublet under $SL(2,\IZ)$ which implies that $\log y$ transforms as an elliptic parameter of Jacobi forms.}
	
With respect to this star product, the semi-flat Darboux coordinates appearing in this setup 
satisfy a more involved commutation relation than \eqref{starXX}:
\be
\cXsf_{\gamma}(t)\star\cXsf_{\gamma'}(t')=y^{\frac{\beta}{2}\, (pp' (p+p'))} \,
\Yrm(\gamma,\gamma',t-t')
\,\cXsf_{\gamma}(t)\cXsf_{\gamma'}(t').
\label{starXX3}
\ee
where
\be 
\Yrm(\gamma,\gamma',\Delta t)=
y^{\langle\gamma,\gamma'\rangle}(y\by)^{-\I(vpp')\Delta t}.
\label{def-comY}
\ee
One could think of the additional factors as arising due to the non-vanishing real part of $\log y$.
If it is set to zero, i.e. $y$ is a pure phase, then $\beta=0$ and the relation \eqref{starXX3} reduces to \eqref{starXX}.
If one splits $\cXsf_\gamma(t)$ into a $t$-dependent part $\cXb_\gamma(t)$ and the angular part $x_\gamma$,
the star product of the latter still contains an additional factor,
\be 
x_\gamma\star x_{\gamma'}=y^{\langle\gamma,\gamma'\rangle+\frac{\beta}{2}\, (pp' (p+p'))}\, x_{\gamma+\gamma'},
\label{prod-xg3}
\ee 
which does not allow to make the identification \eqref{iota}.
However, the mismatch between the algebras \eqref{qtorus-alg} and \eqref{prod-xg3} can easily be rectified 
by multiplying $x_\gamma$ by an additional factor. Indeed, let $m(p)=\frac{1}{6}\,(p^3)+\rho_\alpha p^\alpha$ 
where $\rho_\alpha$ are arbitrary constant coefficients.
Then the following map 
\be 
\iota\ :\ \hx_\gamma\mapsto y^{\beta m(p)} x_\gamma
\label{iota3}
\ee 
is a homomorphism of the algebras.
Using this map in \eqref{hom-sf}, one obtains that, in contrast to Setup 1, now one has 
$\cXz_\gamma=y^{\beta m(p)}\cXsf_\gamma$. At equal parameters $t$, these variables satisfy the standard quantum torus algebra
\be  
\cXz_{\gamma}(t)\star\cXz_{\gamma'}(t)=y^{\langle\gamma,\gamma'\rangle}\,\cXz_{\gamma+\gamma'}(t),
\label{starXX3z}
\ee 
while at non-equal $t$'s their star product gives rise to the same factor $\Yrm(\gamma,\gamma',t-t')$ as in \eqref{starXX3}. 
Note that with respect to the usual product their algebra is also non-trivial: 
$\cXz_{\gamma+\gamma'}=y^{\frac{\beta}{2}\, (pp' (p+p'))}\cXz_{\gamma}\cXz_{\gamma'}$.

The kernel in this case is given as in \eqref{defkerref} with
\be 
\ker_{ij}=\frac{1}{t_j-t_i} 
\ee 
and can be checked to satisfy the identity \eqref{cyc3}.
As a result, all the equations in sections \ref{subsec-star}-\ref{subsec-new} that
are written in terms the star product should still hold.
The only equations that change are those where this product has been evaluated explicitly, 
\eqref{expXref} and \eqref{compact-hcXr}.
In the former the factor $y^{\sum_{j>i=0}^n\gamma_{ij}}$ is replaced by 
\be 
y^{\beta\(m\(\sum_{i=0}^n p_i\)-\sum_{i=0}^n m(p_i)\)}\prod_{j>i=0}^n\Yrm_{ij},
\ee 
where we introduced $\Yrm_{ij}=\Yrm(\gamma_i,\gamma_j,t_i-t_j)$,
while in the latter the factor $\kappa\(\sum_{i=0}^{j-1}\gamma_{ij}\)$ is replaced by
\be 
\frac{y^{\beta\(m\(\sum_{i=0}^j p_i\)-m\(\sum_{i=0}^{j-1} p_i\)-m(p_j)\)}}{y-1/y}
\(\prod_{i=0}^{j-1}\Yrm_{ij}-\prod_{i=0}^{j-1}\Yrm_{ji}\).
\ee 
Besides, the unrefined limit \eqref{compact-hcXr-lim}, which provides an alternative representation of 
$\cX_\gamma$, is also affected. It should be evaluated by first setting $\by=1$ and then taking the limit $y\to 1$
\cite{Alexandrov:2019rth}. As a result, the factor $\sum_{i=0}^{j-1}\gamma_{ij}$ is replaced by
\be
\sum_{i=0}^{j-1}\Bigl(\gamma_{ij}+\I(vp_ip_j)(t_j-t_i)\Bigr)=\sum_{i=0}^{j-1}(t_j-t_i) K_{ij},
\ee
where we expressed the result in terms of the kernel \eqref{short2}.
Despite this change, the proof presented in appendix \ref{ap-unref} that the resulting representation of $\cX_\gamma$ coincides 
with \eqref{Hexpand} still works with minor changes.
Most importantly, the function $\hcXr_\gamma$ defined by \eqref{hcXranz} and \eqref{inteqH-star} 
(or by one of the equations \eqref{expeq}, \eqref{intgeqn-comm}) still provides a solution of the quantum RH problem 
specified to this Setup.

\section{Adjoint representation and generating functions}
\label{sec-genfun}

As was noticed in \cite{Barbieri:2019yya}, the solution of the quantum RH problem can be represented in the adjoint form.
Namely, let us consider the map $\hPsi=\hPhi\circ (\hPhisf)^{-1}$. The claim is that there is a function $\hpsi(t)$
valued in $\IC_y[\IT]$ such that $\hPsi(t)=\Ad_{\hpsi(t)^{-1}}$.
Applying the identification \eqref{iota}, we get the function $\psi=\iota[\hpsi]$ and
the map $\Psi=\iota\circ \hPsi\circ\iota^{-1}$ such that
\be
\Psi=\Ad_{\psi^{-1}}\ :\ \cXz_\gamma  \mapsto \hcXr_\gamma,
\label{mapPsi}
\ee
where multiplication and inversion are evaluated using the star product.
Remarkably, the solution $\hcXr_\gamma$ \eqref{hcXranz} found in the previous section 
by construction has this adjoint form with $\psi$ defined in \eqref{Jref}.
Thus, one obtains the adjoint form of the solution for free.
Furthermore, as was emphasized above, $\psi(t)$ is independent of charges.
Thus, it can be considered as a {\it generating function} of the set of functions $\hcXr_\gamma(t)$ for different charges.

This result has a nice application: by taking the unrefined limit, 
it can be used to derive a generating function of Darboux coordinates $\cX_\gamma$.
Before, such a generating function was known only up second order in the perturbative expansion \cite{Alexandrov:2017qhn}
and a generalization to higher orders did not seem to be obvious.
Now we will find it to all orders.

To this end, let us apply the Campbell identity to the map \eqref{mapPsi}.
It allows us to rewrite it as 
\be 
\hcXr_\gamma=e^{-\ad_{\log_\star\!\psi}} \cXz_\gamma=e^{-\{ \cHr,\, \cdot\, \}_\star}\cXz_\gamma,
\label{genHref}
\ee 
where $\log_*(1+x)=\sum_{n=1}^\infty \frac{(-1)^{n-1}}{n} x\star \cdots \star x$, 
the star bracket was defined in \eqref{Mbracket} and we introduced 
\be 
\cHr=(y-y^{-1})\log_\star \psi.
\label{def-genref}
\ee 
In the unrefined limit $y\to 1$, $\hcXr_\gamma$ and $\cXz_\gamma$ reduce to $\cX_\gamma$ and $\cXsf_\gamma$, respectively,
while the star bracket becomes the ordinary Poisson bracket. As a result, the relation
\eqref{genHref} takes the form
\be 
\cX_\gamma=e^{-\{\cH,\, \cdot\, \}} \cXsf_\gamma,
\label{cXPb}
\ee 
where 
\be 
\cH=\lim_{y\to 1} \cHr.
\ee

Since $\psi$ does {\it not} have a well-defined unrefined limit, one could worry that $\cHr$ 
does not have one either. However, miraculously, it does! In appendix \ref{ap-gen} we prove that
\be
\label{logcJr2}
\cHr(t_0) = \sum_{n=1}^{\infty} \frac{1}{n}\[\prod_{k=1}^{n}\IS_{k} \ker_{k-1,k}\] \sum_{\sigma \in S_n}  \frac{(-1)^{s_\sigma+1}}{\binom{n}{s_\sigma}s_\sigma } 
\left\{\cdots\left\{\cXz_{\sigma(1)},\cXz_{\sigma(2)}\right\}_\star\cdots ,\cXz_{\sigma(n)}\right\}_\star,
\ee
where $S_n$ is the group of permutations of $n$ elements 
and $s_\sigma$ is equal to the number of ascending runs\footnote{An ascending run of a permutation is a nonempty increasing contiguous subsequence that cannot be extended at either end. Their number corresponds to $1+$ the number of elements $1\leq i \leq n-1$ such that $\sigma(i)>\sigma(i+1)$, known as the number of descents of $\sigma$.} of the permutation $\sigma$.
This representation makes the existence of the unrefined limit manifest
and it is straightforward to write the resulting expression for the generating function $\cH$.
Furthermore, it turns out that this expression can also be rewritten in terms of 
the functions $\cX_\gamma$. Namely, we propose the following
\begin{conj} 
\be
	\cH(t_0) = \IS_1 \ker_{01}\cX_1
	-\sum_{n=2}^{\infty}\frac{1}{n(n-1)} \[\prod_{k=1}^{n}\IS_{k}\,\ker_{k-1,k}\cX_{k}\]\(\sum_{\sigma \in S_n^{(+)}}\prod_{k=2}^{n}
	\gamma_{\sigma(k-1)\sigma(k)}\, \),
\ee
where $S_n^{(+)}$ is the set of permutations of $n$ elements such that 
$\sigma(\lfloor\frac{n}{2}\rfloor)<\sigma(\lceil\frac{n}{2}+1\rceil)$.
\end{conj}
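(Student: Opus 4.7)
The plan is to deduce the Conjecture from the proved formula \eqref{logcJr2}. First, take the unrefined limit $y\to 1$, where the Moyal bracket \eqref{Mbracket} reduces to the ordinary Poisson bracket on $\cM$ and $\cXz_\gamma\to\cXsf_\gamma$. Since $\cXsf_\gamma(t)\star\cXsf_{\gamma'}(t')=y^{\langle\gamma,\gamma'\rangle}\cXsf_\gamma(t)\cXsf_{\gamma'}(t')$ by \eqref{starXX}, in the limit one has $\{\cXsf_\alpha(t),\cXsf_\beta(t')\}=\langle\alpha,\beta\rangle\cXsf_\alpha(t)\cXsf_\beta(t')$, and an easy induction telescopes the $n$-fold nested Poisson bracket to
\be
\{\cdots\{\cXsf_{\sigma(1)},\cXsf_{\sigma(2)}\}\cdots,\cXsf_{\sigma(n)}\}=\prod_{k=2}^n\Bigl\langle\gamma_{\sigma(1)}+\cdots+\gamma_{\sigma(k-1)},\gamma_{\sigma(k)}\Bigr\rangle\,\prod_{k=1}^n\cXsf_{\sigma(k)}.
\ee
This converts the unrefined limit of \eqref{logcJr2} into a closed algebraic expression for $\cH(t_0)$ as a sum over permutations of products of pairings $\gamma_{ij}$ and factors of $\cXsf$'s, with no brackets remaining.

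The second step is to expand the RHS of the Conjecture in $\cXsf$'s by substituting the tree expansion \eqref{Hexpand} for each $\cX_k$; the outer product $\prod_k\cX_k$ then becomes a sum over rooted forests (one tree per vertex $k\in\{1,\ldots,n\}$), attached by the factor $\prod_k\IS_k\ker_{k-1,k}$ to a linear spine rooted at vertex $0$. A matching relabelling of the summation variables on both sides places them on a common combinatorial footing, indexed by decorated rooted trees. The proof then reduces to showing, for each such tree, a purely combinatorial identity asserting that the weighted permutation sum $\frac{1}{n}\sum_{\sigma\in S_n}\frac{(-1)^{s_\sigma+1}}{\binom{n}{s_\sigma}s_\sigma}\prod_{k=2}^n\sum_{j<k}\gamma_{\sigma(j)\sigma(k)}$ coincides, after re-aggregating the tree expansion of $\prod_k\cXsf_k$ back into $\prod_k\cX_k$, with the simpler restricted sum $-\frac{1}{n(n-1)}\sum_{\sigma\in S_n^{(+)}}\prod_{k=2}^n\gamma_{\sigma(k-1)\sigma(k)}$. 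A quick check at $n=2$ is encouraging: the LHS gives $\tfrac12\gamma_{12}\cXsf_1\cXsf_2$, while the $n=1$ piece of the Conjecture contributes $\gamma_{12}\cXsf_1\cXsf_2$ through the correction built into $\cX_1$, and the explicit $n=2$ piece contributes $-\tfrac12\gamma_{12}\cXsf_1\cXsf_2$, summing to the same value.

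This combinatorial identity is the main obstacle, and three natural lines of attack present themselves. (i) A sign-reversing involution, such as the reversal $\sigma\mapsto\sigma'$ with $\sigma'(k)=\sigma(n+1-k)$ which swaps $S_n^{(+)}$ with its complement and acts controllably on the descent statistic $s_\sigma$, may allow one to fold the symmetric-group sum onto the restricted one. (ii) A Lie-theoretic interpretation: the weights $(-1)^{s_\sigma+1}/\binom{n}{s_\sigma}s_\sigma$ originate from the Baker--Campbell--Hausdorff expansion of $\log_\star\psi$ through \eqref{def-genref}, and are strongly suggestive of matrix elements of a Solomon--Dynkin-type idempotent in the descent algebra of $S_n$; identifying them explicitly would plug the identity directly into the standard free-Lie-algebra / PBW machinery. (iii) Order-by-order verification, extending the sixth-order cross-check implicit in the calculation around \eqref{resFC}. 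I expect route (ii) to be the most illuminating, and regard pinning down the correct descent-algebra element as the central technical challenge.
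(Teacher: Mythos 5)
This statement is explicitly labeled a \emph{Conjecture} in the paper, and immediately after stating it the authors write that they do not have a proof, only ``extensive numerical checks.'' So there is no paper proof to compare against; the relevant question is whether your attempt actually closes the gap, and it does not.

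Your reduction steps are correct and almost certainly mirror how the authors arrived at the conjecture: taking $y\to 1$ in the proven formula \eqref{logcJr2}, the star bracket degenerates to the Poisson bracket, and the nested bracket telescopes exactly as you write,
\be
\left\{\cdots\left\{\cXsf_{\sigma(1)},\cXsf_{\sigma(2)}\right\}\cdots,\cXsf_{\sigma(n)}\right\}
=\prod_{k=2}^{n}\Bigl\langle\,\gamma_{\sigma(1)}+\cdots+\gamma_{\sigma(k-1)},\,\gamma_{\sigma(k)}\Bigr\rangle\,\prod_{k=1}^{n}\cXsf_{\sigma(k)}\,,
\ee
so that the unrefined $\cH$ is expressed as a linear-tree sum over $S_n$ with the Solomon coefficients $(-1)^{s_\sigma+1}/\bigl(\binom{n}{s_\sigma}s_\sigma\bigr)$. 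Rewriting this in terms of full $\cX_\gamma$'s (by inverting the tree expansion \eqref{Hexpand}) and matching against the conjectured RHS then reduces the claim to a purely combinatorial identity, order by order. Your $n=2$ check of this identity is correct.

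The gap is that you stop precisely at that combinatorial identity and only gesture at three candidate strategies (sign-reversing involution, descent-algebra/Dynkin idempotent interpretation, order-by-order verification) without carrying any of them out. None of these is executed, and the central ``re-aggregation'' step --- showing that the full-permutation sum weighted by $(-1)^{s_\sigma+1}/\bigl(\binom{n}{s_\sigma}s_\sigma\bigr)$, once the tree corrections from $\cXsf\to\cX$ are absorbed, collapses to the restricted sum over $S_n^{(+)}$ with weight $-1/\bigl(n(n-1)\bigr)$ --- is stated but never established. As it stands your proposal is a plausible reduction and a well-informed program, not a proof; in particular it provides no more than what the paper already claims (agreement at low orders), and so does not upgrade the Conjecture to a theorem.
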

\noindent
Although we do not have a proof of this conjecture, we performed its extensive numerical checks
which makes us confident that it should be true. 
It provides a generalization of the second order result from \cite{Alexandrov:2017qhn}
to all orders.

\section{Uncoupled case}
\label{sec-case}

Let us consider the simplest case of a BPS structure which is
\begin{itemize}
	\item  {\it finite}, i.e. $\Omega(\gamma,y)=0$ for all but finitely many $\gamma\in\Gamma$ 
	(we will call the set of such active charges $\Gact$);
	\item {\it uncoupled}, i.e. $\langle \gamma,\gamma'\rangle=0$ for any $\gamma,\gamma'\in\Gact$;
	\item {\it palindromic}, i.e. $\Omega_n(\gamma) = \Omega_{-n}(\gamma)$ for all $n \in \IZ$ and $\gamma \in \Gamma$;
	\item {\it integral}, i.e. $\Omega_n(\gamma) \in \IZ$ for all $n \in \IZ$ and $\gamma \in \Gact$.
\end{itemize}
In Setup 1, it has already been analyzed in \cite{Barbieri:2019yya} and here we show that the solution \eqref{compact-hcXr}
reproduces the one found in that work.

By a symplectic rotation of the basis of charges, for any uncoupled BPS structure, one can always choose it in such a way
that $\Gact$ is a sublattice of the lattice of electric charges $(0,q_\Lambda)\in \Gamma_e\subset \Gamma$. 
Hence, only such charges appear in the sums encoded in the symbols \tIS{i}.
Since the mutual Dirac products of electric charges all vanish, the result \eqref{compact-hcXr} implies that 
$\hcXr_{(0,q_\Lambda)}=\cXsf_{(0,q_\Lambda)}$, where we used that $\cXz_\gamma=\cXsf_\gamma$ in Setup 1.
For a generic charge, one finds instead
\be  
\begin{split} 
	\hcXr_0=&\, \sum_{n=0}^\infty \prod_{k=1}^n \[\IS_k \kappa(\gamma_{0k}) 
	\ker_{k-1,k}\cXsf_k\] \cXsf_0
	\\
	=&\,  \sum_{n=0}^\infty \frac{1}{n!}\prod_{k=1}^n \[\IS_k \kappa(\gamma_{0k}) 
	\ker_{0k}\cXsf_k\] \cXsf_0
	\\
	= &\,\cXsf_0\exp\[\IS_1 \kappa(\gamma_{01}) \ker_{01}\cXsf_1\],
\end{split}
\label{hcXr-mlocal}
\ee
where $\gamma_{0k}=-p_0^\Lambda q_{k,\Lambda}$ and the second equality is a consequence of Lemma \ref{lemma-tree}
applied to $T$ given by a tree shown in Fig. \ref{fig-trees-pr1}(a).
The expression in the exponential in \eqref{hcXr-mlocal} can be written explicitly as
\be
	S_0= \frac{1}{2\pi \I}
	\sum_{\gamma_1\in\Gamma_e}\sigma_{\gamma_1}\bOm(\gamma_1,y)\kappa(\gamma_{01})
	\int_{\ell_1} \frac{\de t_1}{t_1}\,\frac{t_0}{t_1-t_0}\,\cXsf_1.
\ee
Substituting the definition of rational BPS indices \eqref{defbOm} and the Laurent expansion of the integer valued ones 
\eqref{LaurentOm}, one obtains 
\be
\begin{split}
	S_0
	=& \,\frac{1}{2\pi \I }
	\sum_{\gamma_1\in\Gamma_e}\sum_{d|\gamma_1} \sum_{n\in\IZ}
	\sigma_{\gamma_1}\,\Omega_n(\gamma_1/d)\,
	y^{dn}\,\frac{\kappa(\gamma_{01})}{d \kappa(d)}
	\int_{\ell_1} \frac{\de t_1}{t_1} \,\frac{t_0}{t_1-t_0}\,\cXsf_1
	\\
	=& \,  \frac{1}{2\pi \I }
	\sum_{\gamma_1\in\Gact}\sum_{n\in \IZ}
	 \Omega_{n}(\gamma_1)
	\int_{\ell_1} \frac{\de t_1}{t_1}\,\frac{t_0}{t_1-t_0}
	\sum_{d=1}^{\infty}  y^{dn}\, \frac{\kappa(d\gamma_{01})}{d \kappa(d)}
	\,(\sigma_{\gamma_1}\cXsf_1)^{d}.
\end{split}
\ee
Since $\frac{\kappa(d\ell)}{\kappa(d)}=\sgn(\ell)\sum_{k=1-|\ell|}^{|\ell|-1} y^{dk}$, 
the sum over $d$ produces the logarithm. Then after substituting the result into \eqref{hcXr-mlocal} 
and taking into account that $\Omega_n(\gamma)=\Omega_{-n}(-\gamma)$, one remains with 
\be   
\hcXr_0=\cXsf_0 \prod_{\substack{\gamma_1\in\Gact	\\ \Re(Z_{\gamma_1})>0}}
\prod_{k=1-|\gamma_{01}|}^{|\gamma_{01}|-1}\prod_{n\in\IZ}
e^{s_{01}\Omega_n (\gamma_1)\cI_{\gamma_1}(t_0;y^{k+n}) },
\label{hcXr-mlocal-res}
\ee
where $s_{01}=\sign(\gamma_{01})$ and
\be 
\cI_{\gamma}(t;y)=\frac{1}{2\pi\I}\sum_{\eps=\pm}\eps
\int_{\ell_{\eps\gamma}}\frac{\de t'}{t'}\, \frac{t}{t-t'}\,\log\(1-y^\eps\sigma_{\eps\gamma}\cXsf_{\eps\gamma}(t')\).
\ee 
Substituting $y=e^{2\pi\I\alpha}$, $\cXsf_\gamma$ from \eqref{cXsf} and changing the integration variable 
$t'=\I Z_{\eps\gamma}/(s+\I(\vth_{\eps\gamma}+\alpha))$
where $e^{2\pi\I \vth_\gamma}=\sigma_\gamma e^{2\pi\I\theta_\gamma}$, one obtains
\be 
\cI_{\gamma}(t;y)=\frac{1}{2\pi\I}\sum_{\eps=\pm}\eps
\int_{-\I\eps\vth_\gamma}^\infty\de s\, \frac{\log(1-e^{-2\pi s})}{s-\I\eps(Z_\gamma/t-\vth_\gamma-\alpha)}\,.
\ee 
Exactly this integral, but with $\alpha$ set to zero, has been computed in \cite[\S C]{Alexandrov:2021wxu}.
The result is given by
\be 
\cI_{\gamma}(t;y)=-\log\Lambda\(\frac{Z_{\gamma}}{t},1-\[\vth_{\gamma}+\alpha \]\),
\ee 
where 
\be
\label{defLambda}
\Lambda(z,\eta) =\frac{e^z\,\Gamma(z+\eta)}{\sqrt{2\pi} z^{z+\eta-1/2}}
\ee
and
\be
[x] = 
\begin{cases}
	x - \lfloor \Re(x)\rfloor, &\text{if } \Im x \geq 0,
	\\
	x + \lfloor \Re(-x)\rfloor+1,\quad &\text{if } \Im x <0.
\end{cases}
\ee
Substituting this result into \eqref{hcXr-mlocal-res}, one finally arrives at 
the following solution
\be   
\hcXr_0=\cXsf_0 \prod_{\substack{\gamma_1\in\Gact	\\ \Re(Z_{\gamma_1})>0}}
\prod_{k=1-|\gamma_{01}|}^{|\gamma_{01}|-1}\prod_{n\in\IZ}
\Lambda\(\frac{Z_{\gamma}}{t_0},1-\[\vth_{\gamma_1}+(k+n)\alpha \]\)^{-s_{01}\Omega_n(\gamma_1) },
\label{hcXr-res}
\ee
This result coincides with \cite[Th.5.1]{Barbieri:2019yya} up to several discrepancies 
which can all be traced back to either differences in conventions or the way chosen to formulate the result:
\begin{itemize}
\item 
The different sign in the power is due to the opposite sign convention for the Dirac product and 
the additional sign factor appearing in the definition of the Laurent coefficients $\Omega_n(\gamma)$ in \cite{Barbieri:2019yya}.
 
\item 
The presence of the bracket $[\; \cdot\;]$ in the argument of the $\Lambda$-function, which
already appeared in the classical case \cite{Alexandrov:2021wxu} but is absent in \cite{Barbieri:2019yya}, 
simply means that the solution constructed in \cite{Barbieri:2019yya} should be seen as 
an analytic continuation of the branch where the bracket is constant. This can be ensured by restricting to 
$\Re\vth_\gamma\in[0,1)$ and $\Re \alpha=0$.

\item
The difference in the expression of the second argument of the $\Lambda$-function
appears because our variable $\vth_\gamma$ includes the quadratic refinement, whereas
$\theta(\gamma)$ in \cite{Barbieri:2019yya} does not.

\item
The last difference is that in our case the product over $k$ goes over a symmetric range, whereas in \cite{Barbieri:2019yya}
the range is shifted to positive or negative values depending on the sign $s_{01}$.
The difference would disappear if we wrote the solution \eqref{hcXr-res} using the star product, which precisely corresponds
to the way how the function found in \cite{Barbieri:2019yya} should be understood in our terms. 
\end{itemize}
Thus, our formalism successfully reproduces the solution of the simplest quantum RH problem.\footnote{Note that 
the function $\psi(t)$ \eqref{Jref}, which describes the adjoint form of the solution (see \eqref{mapPsi}),
evaluated in the uncoupled case appears to be {\it different} from 
the function representing the adjoint form that was found in \cite[Th.3.2]{Barbieri:2019yya}.
This is not surprising since the adjoint form is far from unique.
Instead, due to \eqref{qKSjump-psi}, it can be seen as a solution of the problem formulated in \cite[Remark 3.8]{Barbieri:2019yya}.}

\section{Conclusions}
\label{sec-conl}

In this paper we investigated the quantum RH problem induced by refined BPS indices.
Our main goal was to reformulate this problem in terms of a system of integral equations
involving the Moyal star product, which provide a non-commutative deformation of the TBA-like equations 
appearing in the classical RH problem.
We have succeeded in finding such a reformulation, though it provides a solution of the quantum RH problem 
only as a function of the solution of an integral equation. 
Unfortunately, we have not been able to find a sufficiently simple deformation of the classical TBA 
that would directly give the refined solution.

The main advantage of having the system of integral equations is that
it is amenable to a perturbative treatment and can be used to generate a formal series in BPS indices
that solves the equations and hence the quantum RH problem.
We have explicitly constructed such a perturbative expansion and showed that it has a smooth unrefined limit
where it gives rise to a new representation for the solution of the classical RH problem.
Whereas usually the solution of the classical TBA is expressed as a sum over all rooted trees, 
in the new representation only the so-called linear trees appear, which considerably simplifies the analysis.

Another interesting byproduct of this construction is that it automatically provides generating functions
of the solutions both at refined and unrefined levels.
This might have important implications. For instance, it opens a door for investigating various symmetries
since it is much easier to study a single function than quantities defined on a lattice.
Furthermore, we have derived perturbative expansions for the generating functions, which  
are also represented using linear trees and a sum over permutations.

As a check of our construction, we have shown that it reproduces the solution of the quantum RH problem found in 
\cite{Barbieri:2019yya} for the case of uncoupled BPS structures.
It would be interesting to extend this analysis to the case of the resolved conifold 
for which a solution has been found in \cite{Chuang:2022uey}.
However, based on the classical case analysis \cite{Alexandrov:2021prq},
one can expect that it would require finding proper prescriptions for dealing with divergent series.

In this paper we presented three setups that realize the quantum and classical RH problems.
Most of the results that we described are valid for two of them, whereas in Setup 2 relevant for \cite{Gaiotto:2008cd,Alexandrov:2008gh,Cecotti:2014wea},
the same construction produces a solution that does not have the unrefined limit.
We suggested a way to remedy this problem by noticing that in this setup the construction allows for an ambiguity. 
By requiring the existence of the unrefined  limit, 
we have fixed first terms in the expansion of the functional encoding the ambiguity and guessed its exact form. 
However, the unrefined limit of the resulting solution turns out to be different from the solution of the classical RH problem,
which can be traced back to the modified asymptotic conditions.
Thus, the fate of the quantum RH problem in this setup is still an open problem.

Finally, we hope that the analysis presented in this paper is a step towards understanding 
the non-commutative deformation induced by refinement of the moduli spaces arising in gauge and string theories.
Since these are typically quaternionic spaces, which are best described using twistor methods,
this might have intriguing connections to Penrose's twistor quantization program.
Furthermore, in the string theory context, the relevant spaces are QK
whose twistor spaces carry a holomorphic contact structure and, as far as we know, 
only recently an attempt was made to define its refined version \cite{Alexandrov:2023wdj}.
(Note that whereas in the gauge theory context the non-commutativity induced by the refinement 
has a physical interpretation in terms of a ring of line operators \cite{Gaiotto:2010be},
such an interpretation is lacking for string theory constructions.)
Other recent exciting developments, such as \cite{Brini:2024gtn,gräfnitz2025enumerativegeometryquantumperiods,Huang:2025xkc},
show that we are still at the beginning of exploring the vast and wonderful world of refined structures.

\section*{Acknowledgments}

We would like to thank Tom Bridgeland, Andrea Brini, Andy Neitzke and Boris Pioline for valuable discussions.
SA is grateful to the organizers of the ``Workshop on Resurgence, wall-crossing and geometry"
and ``Algebra and Quantum Geometry of BPS Quivers" at the SwissMAP Research Station
in Les Diablerets, where a part of this work was completed, 
for the kind hospitality.

\appendix

\section{Revisiting the CNV proposal}
\label{ap-qTBA}

In \cite{Cecotti:2014wea}, a quantum version of the TBA equation \eqref{TBAeq-sh} has been proposed
in the context of $N=2$ supersymmetric theories in the $\Omega$-background, corresponding to Setup 2 of section \ref{sec-TBA}. 
Using notations from the main text, 
it can be written as\footnote{In \cite{Cecotti:2014wea}, the first argument of the function $G_{\gamma_{12}}$
	is multiplied by $y^{2s_{\gamma_2}}$ where $s_\gamma$ are spins of BPS particles. 
	For simplicity, we restrict to scalars.}
\be
\hcX_0=T\left\{\exp\[-\frac{1}{4\pi\I}\,\IS_1 \,\frac{t_1 +t_0}{t_1-t_0}\, \log G_{\gamma_{01}} (\hcX_1;y) \]\hcXsf_0\right\}.
\label{qTBAeq}
\ee
Here the non-commutativity is induced by the angle variables
\be  
[\Theta_\gamma,\Theta_{\gamma'}]=\frac{\log y^2}{(2\pi\I)^2}\, \langle\gamma,\gamma'\rangle.
\ee 
This implies the commutation relation (cf. \eqref{starXX})
\be
\hcXsf_{\gamma}(t)\hcXsf_{\gamma'}(t') =
y^{2\langle\gamma,\gamma'\rangle}\hcXsf_{\gamma'}(t')\hcXsf_{\gamma}(t),
\label{comXsf}
\ee
and it is assumed that a similar relation holds for $\hcX_\gamma$'s evaluated at equal `times', where
time is associated with the phase of the twistor parameter $t$:
\be  
\hcX_\gamma(\rho e^{\I\tau})\, \hcX_{\gamma'}(\rho' e^{\I\tau})
=y^{2\langle\gamma,\gamma'\rangle}\hcX_{\gamma'}(\rho' e^{\I\tau})\,\hcX_\gamma(\rho e^{\I\tau}) .
\label{comm-cXg}
\ee 
The time ordering appearing in \eqref{qTBAeq} is then defined by
\be  
T\Bigl\{\hcX_\gamma(\rho e^{\I\tau})\, \hcX_{\gamma'}(\rho' e^{\I\tau'})\Bigr\}
=\left\{\begin{array}{ll}
	\hcX_\gamma(\rho e^{\I\tau})\, \hcX_{\gamma'}(\rho' e^{\I\tau'}),
	\qquad &\tau>\tau'
	\\
	y^{2\langle\gamma,\gamma'\rangle}\hcX_{\gamma'}(\rho' e^{\I\tau'})\,\hcX_\gamma(\rho e^{\I\tau}) ,
	\qquad &\tau'>\tau.
\end{array}
\right.
\label{def-Tprod}
\ee
Finally, the function $G_{\gamma_{01}} (\hcX_1;y)$ is given by the adjoint action of the quantum dilogarithm \eqref{defEy}:
\be  
E_y(\hcX_1) \, \hcX_0 \, E_y^{-1}(\hcX_1)=G_{\gamma_{01}} (\hcX_1;y) \hcX_0.
\ee 
Using the first representation in \eqref{defEy} of the quantum dilogarithm and the Campbell identity, 
it is straightforward to show that 
\be   
\log G_{\gamma_{01}} (\hcX_1;q)=\sum_{k=1}^\infty \frac{\(1-q^{k\gamma_{01}}\)\hcX_1^k}{k(q^{k/2}-q^{-k/2})}\, .
\label{funG}
\ee 

Although \cite{Cecotti:2014wea} writes the quantum TBA in terms of unrefined BPS indices $\Omega(\gamma)$,
hidden in \eqref{qTBAeq} in the symbol \tIS{i}, we believe that they should be replaced by refined indices $\Omega(\gamma,y)$.
Then the explicit expression \eqref{funG} and the rational refined BPS indices \eqref{defbOm} allow us
to rewrite the equation \eqref{qTBAeq} in a much simpler form, 
which is a straightforward non-commutative generalization of \eqref{TBAeq-sh},
\be
\hcX_0=T\left\{\exp\[\IS_1 \,\hK_{01} \hcX_1 \]\hcXsf_0\right\},
\label{qTBA}
\ee
where the kernel is given by
\be
\hK_{ij}=\frac{y^{2\gamma_{ij}}-1}{2(y-y^{-1})}\,\frac{t_j +t_i}{t_j-t_i}\, .
\ee

However, the question is whether the quantum TBA \eqref{qTBA} indeed provides a solution of the quantum RH problem.
There are two observations that seem to indicate that this is not the case.
First, the two commutation relations, \eqref{comXsf} and \eqref{comm-cXg}, 
appear to be mutually inconsistent provided $\hcX_\gamma$ solves \eqref{qTBA}. 
To see this, it is sufficient to substitute in the commutator of two $\hcX_\gamma$'s their expansion
obtained from the non-commutative TBA 
\be 
\hcX_0\approx \hcXsf_0 +\IS_1\hK_{01} T(\hcXsf_1\hcXsf_0)
+\IS_1\IS_2\(\hf\,\hK_{01}\hK_{02}+\hK_{01}\hK_{12}\) T(\hcXsf_2\hcXsf_1\hcXsf_0)+\cdots .
\ee 
Then one finds that, at equal parameters $t$, the commutation relation \eqref{comm-cXg}
is spoiled at quadratic order in BPS indices, whereas at non-equal parameters, even if they have equal phases,
it is spoiled already at linear order. 

The second observation is that the wall-crossing relation \eqref{qKSjump-X}, which seems to follow from \eqref{qTBA} 
in the same way as in the classical case, in fact, does not hold. 
This is because, after $t_0$ crosses an active BPS ray, the integral picks up the residue and one obtains 
an exponential of the sum of two operators. But due to the non-commutativity of the operators, 
it is {\it not} equal to the product of two exponentials, as required by the wall-crossing condition.\footnote{One 
	could think that the exponential of the sum is equal to the product of exponentials due to the time ordering
	appearing in \eqref{qTBA}. However, as it is defined in \eqref{def-Tprod}, it does not actually help
	because its effect is just to extend the commutation relation \eqref{comm-cXg} to non-equal `times'.
	It is possible that a proper modification of the $T$-product can resolve all problems, 
	but we did not investigate this possibility.}

\section{Wall-crossing of $\cXr_\gamma$}
\label{ap-wallcros}

To derive the wall-crossing behavior of $\cXr_\gamma$, 
we compute the effect of crossing an active BPS ray $\ell$ on the perturbative expansion \eqref{expXref}.
Namely, let $t_0$ cross $\ell$ in the clockwise direction.
Naively, it is sufficient to pick up the residue of the integral over $t_1$ at $t_1=t_0$ for $\gamma_1\in \Gamma_\ell$.
However, there are additional contributions due to the possibility of having several overlapping contours, 
which invokes the symmetric prescription. Let us consider a contribution to the $n$-th term on the r.h.s. of 
\eqref{expXref} where the first $m$ integrals go along $\ell$, i.e. $\gamma_k\in \Gamma_\ell$ for $1\leq k \leq m$.
According to the symmetric prescription, it is given by 
\be 
\hcXr_0^{(n,m)}=\(\frac{1}{m!}\sum_{\sigma\in S_m}
\prod_{j=1}^{m}\sum_{\gamma_j\in\Gamma_{\ell} }\,\int\limits_{\sigma(\ell_j)} \de t_j\)
\sum_{\gamma_{m+1}\in\Gamma_{\!\star}\backslash \Gamma_\ell }\,\int\limits_{\ell_{\gamma_{m+1}}}\de t_{m+1}
\(\prod_{j=m+2}^n\sum_{\gamma_j\in\Gamma_{\!\star} }\,\int\limits_{\ell_{\gamma_j}}\de t_j\)\cI^{(n)},
\label{contr-sigma}
\ee 
where $S_m$ is the group of permutations of $m$ elements, $\ell_j=e^{-\I j \eps}\ell$ 
with $\eps$ a positive infinitesimal parameter and $\cI^{(n)}$ denotes the integrand.
When one starts moving $t_0$ in the clockwise direction, one picks up a residue when it crosses $\sigma(\ell_1)$,
setting simultaneously $t_1=t_0$. Therefore, if $\sigma(2)>\sigma(1)$, moving $t_0$ further, the first residue
picks up a contribution of the pole from the integral over $t_2$, and so on.
Given that there are $\frac{m!}{k!(m-k)!}$ permutations with the first $k$ elements satisfying 
$\sigma(1)<\cdots <\sigma(k)$ and the remaining $m-k$ elements having a fixed ordering,
on the other side of the BPS ray the contribution \eqref{contr-sigma} is given by
\be 
\sum_{k=0}^m \frac{1}{k!}
\prod_{j=1}^{k}\[\sum_{\gamma_j\in\Gamma_{\ell} }\frac{\sigma_{\gamma_j}\bOm(\gamma_j,y)}{y-1/y}\,
y^{\sum_{i=0}^{j-1}\gamma_{ij}}\]
\hcXr_{\gamma_0+\gamma_{(k)}}^{(n-k,m-k)}(t_0^+),
\label{jump-sigma}
\ee 
where $\gamma_{(k)}=\sum_{j=1}^{k} \gamma_j$.
Then the sum over $n$ and $m$ replaces the last factor by 
$\cXr_{\gamma_0+\gamma_{(k)}}=y^{\langle\gamma_0,\gamma_{(k)}\rangle}\cXz_{\gamma_{(k)}}\star \cXr_{\gamma_0}$
where we used the relation \eqref{comm-Xr}.
Furthermore, since the charges belonging to $\Gamma_\ell$ are all mutually local, 
$\sum_{i=0}^{j-1}\gamma_{ij}=\gamma_{0j}$ and $\cXz_{\gamma_{(k)}}=\cXz_{\gamma_1}\star\cdots \star \cXz_{\gamma_{k}}$. 
As a result, the sum over $k$ gives rise to 
an exponential and one arrives at the wall-crossing relation \eqref{qKSjump-Xr}.

\section{Proofs}
\label{ap-proofs}

\lfig{A representation of the identity \eqref{cyc3} in terms of rooted trees.
	Here $T_k$'s are any rooted subtrees.}
{identity}{15cm}{fig-identity}{-0.3cm}

In this appendix, we present proofs of several statements made in the main text,
which all have to do with various perturbative representations of the quantum and classical variables.
The main tool used to prove these statements is the identity \eqref{cyc3}.
Importantly, it has a geometric interpretation. The point is that, in all equations appearing in this work,
the kernels $\ker_{ij}$ can be thought of as factors assigned to edges of a labeled rooted tree (see, e.g., \eqref{Hexpand}).
Hence, we introduce 
\be 
\frKr_T=\prod_{e\in E_T}\ker_{s(e)t(e)}.
\label{def-frK}
\ee  
Then the identity \eqref{cyc3} is equivalent to the identity between the factors $\frKr_T$ associated 
to the labeled rooted trees shown in Fig. \ref{fig-identity}. 
It can be applied to any vertex that has at least two children and produces two trees 
where the number of children of that vertex decreased by one and the depth of some vertices increased by one.
It is clear that, applying this identity recursively, one can express $\frKr_T$ for any tree as 
a sum over linear trees for which every vertex has only one child. 
In fact, one can make an even stronger statement which will be extensively used below. 
To this end, let us recall that a rooted tree induces a natural partial ordering on its vertices: 
$v<v'$ if the path from $v'$ to the root passes through $v$.
Besides, we introduce the labeling map $\ell_T \ :\ V_T\to \IN$ from vertices of a tree $T$ to natural (unequal) numbers.
Then the following statement holds

\begin{lemma}\label{lemma-tree}
For a labeled rooted tree $T$ with $n$ vertices, one has 
\be 
\frKr_T=\sum_{T'\in\IT_{n}^{\rm lin}(T)} \frKr_{T'},
\label{eq-lemma}
\ee 
where $\IT_{n}^{\rm lin}(T)$ is the set of linear labeled rooted trees with $n$ vertices such that 
the labeling preserves the partial ordering of the original tree $T$ in the sense that
$v<v'\Rightarrow \ell_{T'}^{-1}(\ell_T(v))< \ell_{T'}^{-1}(\ell_T(v'))$.
\end{lemma}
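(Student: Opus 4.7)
The plan is to proceed by induction on a nonnegative integer measure of non-linearity, for instance $M(T) = \binom{n}{2} - \sum_{v\in V_T} d_T(v)$, where $d_T(v)$ is the depth of the vertex $v$ in $T$. This quantity vanishes precisely when $T$ is linear, and in that case $\IT_n^{\rm lin}(T) = \{T\}$ so that the claimed identity is trivial, providing the base case of the induction.

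For the inductive step, assuming $T$ is not linear, I would pick any vertex $i$ with at least two children, say $j$ and $k$, and apply \eqref{cyc3} to the pair of factors $\ker_{ij}\ker_{ik}$ appearing in $\frKr_T$, leaving all the remaining factors in the product \eqref{def-frK} untouched. This is precisely the tree identity depicted in Fig.~\ref{fig-identity}, and gives $\frKr_T = \frKr_{T_1} + \frKr_{T_2}$, where $T_1$ is obtained from $T$ by removing the edge $(i,k)$ and inserting the edge $(j,k)$, so that the entire subtree rooted at $k$ is transplanted below $j$, while $T_2$ is obtained symmetrically by swapping the roles of $j$ and $k$. In $T_1$ every vertex in the subtree rooted at $k$ has its depth increased by one, so $\sum_v d_{T_1}(v) > \sum_v d_T(v)$ and consequently $M(T_1) < M(T)$; an analogous argument gives $M(T_2) < M(T)$, so the induction hypothesis applies to both $T_1$ and $T_2$.

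It then remains to check that $\IT_n^{\rm lin}(T) = \IT_n^{\rm lin}(T_1) \sqcup \IT_n^{\rm lin}(T_2)$ as a disjoint union, which combined with the inductive hypothesis applied to $T_1$ and $T_2$ yields the lemma. The partial order of $T_1$ coincides with that of $T$ augmented by the single additional relation $j < k$ (from which $j < v$ for every descendant $v$ of $k$ follows by transitivity, matching the new tree shape), and analogously for $T_2$ with $k < j$. Therefore, any linear extension of $T$ must place the label of $j$ either before or after that of $k$, giving rise to a linear extension of $T_1$ in the first case and of $T_2$ in the second; conversely, a linear extension of $T_1$ or $T_2$ respects all relations of $T$ and hence belongs to $\IT_n^{\rm lin}(T)$.

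The step that I expect to demand the most care is this last combinatorial verification: one must confirm that extending the partial order of $T$ by the single relation $j < k$ (respectively $k < j$) reproduces exactly the partial order of $T_1$ (respectively $T_2$), rather than a weaker or stronger order, so that the split of linear extensions is both exhaustive and non-overlapping. The matching of the transitive consequences on descendants of $j$ and $k$ with the new edges created by the tree operation is geometrically clear from Fig.~\ref{fig-identity}, but making it fully rigorous requires a careful bookkeeping of which ancestor-descendant relations are preserved, created, or forced by transitivity when moving one subtree beneath a sibling.
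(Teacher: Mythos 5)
Your proof is correct, and it takes a genuinely different route from the paper's. The paper first establishes the identity for a special class of ``almost linear'' trees (a linear spine with one extra leaf attached at an interior vertex) by repeatedly applying \eqref{cyc3} to push the leaf down, and then performs induction on the number $n$ of vertices by removing a leaf, applying the induction hypothesis to the smaller tree, and re-inserting the leaf via the almost-linear result. Your argument instead does a single induction on a non-linearity measure $M(T)$ and packages the combinatorics directly in terms of the poset of ancestor-descendant relations: one application of \eqref{cyc3} at a vertex with two children $j,k$ strictly decreases $M$, and the key observation is that the poset of $T_1$ (resp.\ $T_2$) is exactly the transitive closure of that of $T$ augmented by $j<k$ (resp.\ $k<j$), so that linear extensions of $T$ split disjointly according to the relative order of $j$ and $k$. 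Your approach buys conceptual clarity: it identifies $\IT_n^{\rm lin}(T)$ explicitly with the set of linear extensions of the tree poset and shows the tree identity is simply a ``dichotomy on an incomparable pair'' at the poset level, without needing a separate treatment of almost-linear trees or a leaf-removal device. One small polish worth adding: note that $\sum_v d_T(v)$ counts the number of comparable pairs $(v,v')$ with $v<v'$, hence $M(T)=\binom{n}{2}-\sum_v d_T(v)\ge 0$ with equality precisely for linear trees (so your measure is indeed non-negative and your base case is correctly characterized); and spell out that the transitive closure $\langle <_T\,\cup\,\{j<k\}\rangle$ is antisymmetric precisely because $j$ and $k$ are incomparable siblings in $T$, so it really is the poset $<_{T_1}$. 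With those sentences in place the ``careful bookkeeping'' you flagged is fully discharged.
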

\begin{proof}
At the first step, let us prove the statement of the lemma for a particular class of rooted trees given by 
``almost linear trees", namely linear trees consisting of $n-1$ vertices and the additional $n$-th vertex attached to 
the $k$-th vertex with $1\leq k\leq n-1$.\footnote{Of course, if $k=n-1$, this is the usual linear tree.}
Applying the basic identity, one obtains two trees: one is a linear tree from the set $\IT_{n}^{\rm lin}(T)$,
while the second is again an almost linear tree with the $n$-th vertex attached to the $k+1$ vertex
(see Fig. \ref{fig-almostlinear}).
It is obvious that proceeding in this way one arrives at the sum of $n-k$ terms represented by 
linear trees where the $n$-th vertex is inserted after the $m$-th vertex where $k\leq m\leq n-1$.
The set of these trees is exactly $\IT_{n}^{\rm lin}(T)$, which proves the statement.

\lfig{The almost linear tree appearing in the proof of Lemma \ref{lemma-tree} 
	and the result of application of the basic identity.}
{almostlinear}{10cm}{fig-almostlinear}{-0.8cm}

To generalize to arbitrary rooted tree, we proceed by induction.
For $n=1$ and 2 the statement is trivial since the r.h.s. of \eqref{eq-lemma} is identical to the l.h.s.
Let us assume now that \eqref{eq-lemma} holds for all trees with $n-1$ vertices and consider a tree with $n$ vertices.
Pick up any leaf and for simplicity consider a labeling where its label is $n$. Applying the induction hypothesis 
to the tree where this leaf is removed, one obtains the sum over almost linear trees where the $n$-th vertex is attached 
to the vertex of the linear subtree carrying the same label as its parent in the original tree.
Next, one can apply the relation \eqref{eq-lemma} to these almost linear trees as it was proven at the first step.
In this way one gets a sum over linear trees which all satisfy the conditions of the set $\IT_{n}^{\rm lin}(T)$.
Moreover, it is easy to see that they are all different and also exhaust this set, which proves the lemma.
\end{proof}

\subsection{Perturbative solution}
\label{ap-pert}

First, we prove the representation \eqref{linhHr-comm} for the perturbative solution of the quantum RH problem 
encoded in $\hcXr_\gamma(t)$. We proceed by induction in the perturbation order. 

In the linear approximation, $\psi(t_0)$ in \eqref{hcXranz} can be replaced by 
$1+\mbox{\resizebox{0.025\vsize}{!}{$\displaystyle{\IS_1}$}}\Kr_{01} \cXz_1$
and it is trivial to see that keeping only the first order in the expansion 
gives 
\be
\hcXr_0=\cXz_0+
\IS_1\ker_{01}\left\{\cXz_0,\cXz_1\right\}_\star+O(\bOm(\gamma,y)^2),
\label{linhHr-comm-1}
\ee
consistently with \eqref{linhHr-comm}.

So let us assume that the representation \eqref{linhHr-comm} holds up to order $n-1$.
To prove that it continues to hold at the next order, we note that the expansion of \eqref{hcXranz} 
in powers of $\cXr_\gamma$ leads to
\be
\hcXr_0 = \cXz_0 + \sum_{m=1}^{\infty} (-1)^{m-1}\IS_1\ker_{01}\[\prodi{k=2}{m}_\star \IS_k\Kr_{0k}\cXr_k \] \star
\left\{\cXz_0,\cXr_1\right\}_\star,
\label{expinref}
\ee
where $\prod_\star$ is defined in \eqref{defprodstar}.
In fact, the order of factors in the product over $k$ is not important, but we choose it be descending for definiteness.
Note that the product of kernels in the term of $m$-th order 
can be represented by a rooted tree with $m$ vertices labeled from 1 to $m$ 
and all connected to the root labeled by 0. Furthermore, if one assigns to the $k$-th vertex (except the root)
the factor $\mbox{\resizebox{0.025\vsize}{!}{$\displaystyle{\IS_k}$}}\,\cXr_k/(y-y^{-1})$
and accepts an additional rule that they are multiplied using the star product from left to right,
one arrives at the unique labeled rooted tree shown in Fig. \ref{fig-trees-pr1}(a).
Substituting further the expansion \eqref{expXref} of $\cXr_\gamma$ in terms of $\cXz_\gamma$, one arrives at
\be
\hcXr_0 = \cXz_0 + \sum_{n=1}^{\infty}\sum_{m=1}^{n} \frac{(-1)^{m-1}}{(y-y^{-1})^{n-1}}\[ \prod_{i=1}^n \IS_i\]
\sum_{\sum_{k=1}^m n_k=n}\frKr_{T_{\bfn}^{(m)}}
\[\prodi{k=2}{m}_\star \prodi{i=j_{k-1}+1}{j_k}^{\ \ \;\star}\cXz_i\] \star
\left\{\cXz_0,\prodi{i=1}{n_1}^\star\cXz_i\right\}_\star,
\label{expinsf}
\ee
where $n_k\geq 1$, $\bfn=(n_1,\dots n_m)$, $j_k=\sum_{l=1}^{k} n_l$, $T_{\bfn}^{(m)}$ is the tree shown in 
Fig. \ref{fig-trees-pr1}(b),
and the ordering of $\cXz_i$'s follows the labeling of the tree: from left to right and from top to bottom. 

\lfig{The rooted trees corresponding to the expansions of $\hcXr_0$ in $\cXr_0$ and $\cXz_0$, respectively.}
{trees-pr1}{16cm}{fig-trees-pr1}{-0.4cm}

The statement we want to prove would follow if we can show that the $n$-th term in \eqref{expinsf} 
is equal to 
\be  
\sum_{m=1}^{n-1} \frac{(-1)^{m-1}}{(y-y^{-1})^{n-2}} \[ \prod_{i=1}^n \IS_i\]
\sum_{\sum_{k=1}^m n_k=n}\frKr_{\tT_{\bfn}^{(m)}}
\[\prodi{k=2}{m}_\star\prodi{i=j_{k-1}+1}{j_k}^{\ \ \;\star}\cXz_{i}\] \star
\left\{\left\{\cXz_0,\cXz_1\right\}_\star,\prodi{i=2}{n_1}^\star\cXz_{i} \right\}_\star,
\label{expinsf-n}
\ee 
where $n_k\geq 1$ for $k\geq 2$ whereas $n_1\geq 2$, and $\tT_{\bfn}^{(m)}$ is the tree shown in Fig. \ref{fig-trees-pr1}(c).
Indeed, redefining $n_1\to n_1+1$, it is easy to see that 
this expression is the same as the term of order $n-1$ in \eqref{expinsf} with $\cXz_0$ replaced by 
$\mbox{\resizebox{0.025\vsize}{!}{$\displaystyle{\IS_1}$}}\ker_{01}\{\cXz_0,\cXz_1\}$
and therefore it is subject to the induction hypothesis, which immediately gives the $n$-th order term in \eqref{linhHr-comm}.

\lfig{Cancellation between trees after application of Lemma \ref{lemma-tree}.}
{trees-pr2}{16cm}{fig-trees-pr2}{-0.4cm}

To show the equality of \eqref{expinsf-n} and the $n$-th term in \eqref{expinsf}, we compare their contributions
after application to them of Lemma \ref{lemma-tree}. 
They are all given by linear trees that differ by the distribution of labels and $\cXz_i$-dependent factors determined 
by the structure of the original trees.
Let us consider linear trees produced by the tree $T_{\bfn}^{(m)}$ for some $\bfn$ with $m$ entries. 
According to Lemma \ref{lemma-tree},
the vertex attached to the root must carry the same label as one of the children of the root of $T_{\bfn}^{(m)}$.
Let it be $j_{k-1}+1$ with $k>1$ and assume that $n_k>1$. Then it is easy to realize that the corresponding contribution 
is canceled by the contribution of the same linear tree generated by $T_{\bfn_k}^{(m+1)}$ 
where $\bfn_k=(n_1,\dots,n_k-1,1,n_{k+1},\dots,n_m)$ with labels distributed as shown in Fig. \ref{fig-trees-pr2}.
They have the same $\cXz_i$-dependent factor, but come with opposite signs because $\bfn_k$ has $m+1$ entries. 
Note that the trees $T_{\bfn_k}^{(m+1)}$ constitute the subset of $T_{\bfn}^{(m+1)}$'s with $n_{k+1}=1$,
the missing case in the set considered above.
Thus, after all cancellations, we remain with (see Fig. \ref{fig-trees-pr3})
\begin{itemize}
	\item 
	the linear trees generated by all $T_{\bfn}^{(m)}$ with $n_1>1$ where the vertex attached to the root has label 1;
	\item 
	the linear trees generated by all $T_{\bfn_0}^{(m+1)}$ with $\bfn_0=(1,n_1-1,n_2,\dots,n_m)$
	where the vertex attached to the root has label 1;
	\item
	the linear trees generated by all $T_{\bfn_1}^{(m+1)}$ with $n_1>1$ where the vertex attached to the root has label $n_1$. 
\end{itemize}
Let us relabel the vertices of $T_{\bfn_1}^{(m+1)}$ generating the last remaining contributions as shown in Fig. \ref{fig-trees-pr3}
so that in the linear tree the vertex attached to the root has label 1 as in the other two cases.
Then it can be combined with the other remaining contributions leading to the following $\cXz_i$-dependent factor
\bea
&&
\[\prodi{k=2}{m}_\star \prodi{i=j_{k-1}+1}{j_k}^{\ \ \;\star}\cXz_i\] \star
\(\left\{\cXz_0,\prodi{i=1}{n_1}^\star\cXz_i\right\}_\star 
-\prodi{i=2}{n_1}^\star\cXz_i\star  \left\{\cXz_0,\cXz_1\right\}_\star
-\cXz_1 \star\left\{\cXz_0,\prodi{i=2}{n_1}^\star\cXz_i\right\}_\star\)
\nn\\
&=&\
\[\prodi{k=2}{m}_\star \prodi{i=j_{k-1}+1}{j_k}^{\ \ \;\star}\cXz_i\] \star
\left\{\left\{\cXz_0,\cXz_1\right\}_\star , \prodi{i=2}{n_1}^\star \cXz_i\right\}_\star.
\label{comb-proof1}
\eea
This is precisely the factor appearing in \eqref{expinsf-n}. Since in the resulting linear trees 
the label of the vertex attached to the root is always 1, through Lemma \ref{lemma-tree} they can be recombined into 
trees $\tT_{\bfn}^{(m)}$ shown in Fig. \ref{fig-trees-pr1}(c).
This proves the required statement.

\lfig{Contributions remaining after cancellations.}
{trees-pr3}{18cm}{fig-trees-pr3}{-0.6cm}

\subsection{Unrefined solution}
\label{ap-unref}

Next, we prove the equivalence of the perturbative representations \eqref{Hexpand} and \eqref{compact-hcXr-lim}
of the unrefined functions $\cX_\gamma$.
As in the previous proof, we proceed by induction in the perturbation order. 

It is immediate to see that the terms with $n\leq 1$ in \eqref{compact-hcXr-lim} agree with the first two terms in
\eqref{Xexpandshort}. So let us assume that the two representations coincide up to order $n-1$ and 
compare the terms of order $n$.
To this end, let us note that the sum over rooted trees in \eqref{Hexpand} can be rewritten as a sum over 
{\it labeled} rooted trees, with the root carrying the fixed label 0. We denote the set of such trees with $n$ vertices by
$\IT_{n}^{\rm r,\ell}$. This affects only the weight of each tree, which now becomes $1/n!$ for all trees. 
Thus, the order $n$ contribution is given by
\be
\label{Hexpand-l}
\frac{1}{n!}\, \cXsf_0 \left(
\prod_{i=1}^n  \IS_i\,  \cXsf_i \right) \sum_{T\in \IT_{n+1}^{\rm r,\ell}}\,
\prod_{e\in E_T}K_{s(e)t(e)}\, .
\ee

Next, we apply Lemma \ref{lemma-tree} to the contribution of each tree in \eqref{Hexpand-l}.
It allows us to rewrite them as contributions of linear trees. 
Let us recombine the linear tree contributions that come from the rooted trees shown in Fig. \ref{fig-proof2}
and have the same labeling in which the vertex attached to the root has label 1.
These contributions have the same factor $\frKr_{T'}$, while their $\gamma_{ij}$-dependent factors give
\be 
\gamma_{01}\(\sum_{\cI\subseteq \Zv_m} \prod_{i\in \cI} \gamma_{0r_i}\prod_{j\in \Zv_m\setminus\cI} \gamma_{1r_j}\) 
\prod_{i=1}^m \gamma_{T_i}
=\gamma_{01}\prod_{i=1}^m (\gamma_{0r_i}+ \gamma_{1r_i}) 
\prod_{i=1}^m \gamma_{T_i},
\ee 
where $\Zv_{m}=\{1,\dots,m\}$, $r_i$ is the label of the root of the subtree $T_i$, and $\gamma_T=\prod_{e\in E_T}\gamma_{s(e)t(e)}$,
Thus, we get a linear tree contribution coming from a rooted tree with $n$ vertices 
where the root carries the charge $\gamma_0+\gamma_1$, the weight 
$\cXsf_0\mbox{\resizebox{0.025\vsize}{!}{$\displaystyle{\IS_1}$}}K_{01}\cXsf_1$ and its descendants 
are subtrees $T_i$. It is clear that all linear tree contributions with the child of the root labeled by 1
come from one of the trees shown in Fig. \ref{fig-proof2} for some subtrees $T_i$.
Moreover, if the child of the root has a different label, one can do a relabeling to make it 1.
Since the number of different labels is equal to $n$, this produces a factor of $n$
converting $1/n!$ in \eqref{Hexpand-l} to $1/(n-1)!$. 
As a result, applying Lemma \ref{lemma-tree} in the reverse direction, one obtains 
\be
\label{Hexpand-l2}
\frac{1}{(n-1)!} \, \cXsf_0\IS_1 K_{01}\cXsf_1 \left(
\prod_{i=2}^n  \IS_i\,  \cXsf_i \right) \sum_{T\in \IT_{n}^{\rm r,\ell}}\,
\prod_{e\in E_T}K_{s(e)t(e)}\, ,
\ee
where the root of $T$ carries the label 1 but charge $\gamma_0+\gamma_1$. This expression is subject to the induction hypothesis,
which leads to the representation \eqref{compact-hcXr-lim} upon replacing $\cXsf_i$ by $\cXz_i$.

\lfig{Trees constructed from the root 0, its child 1 and any number of subtrees $T_1, \dots, T_m$ which are attached 
	either to 0 or 1 in all possible ways.}
{proof2}{18cm}{fig-proof2}{-0.4cm}

\subsection{Integral equation}
\label{ap-eq}

Here we prove the integral equation \eqref{intgeqn-comm}.
The proof is very similar to the proof of the representation \eqref{linhHr-comm} in appendix \ref{ap-pert}.
The idea is to start again from the expansion \eqref{expinref} of $\hcXr_\gamma$ in powers of $\cXr_\gamma$
but then, instead of the expansion \eqref{expXref} of $\cXr_\gamma$ in terms of $\cXz_\gamma$, 
to use its expansion \eqref{cXr-hcXr} in terms of $\hcXr_\gamma$.
Besides different variables, the only difference between the two expansions is the opposite order in which they are multiplied.
Therefore, the proof presented in \S\ref{ap-pert} goes through until Eq. \eqref{comb-proof1}
with the following changes:
\begin{itemize}
	\item 
	all $\cXz_i$ except $\cXz_0$ should be replaced by $\hcXr_i$;
	\item 
	the ordering of $\hcXr_i$'s follows the labeling of the tree according to the rule: from left to right 
	and from bottom to top (instead of from top to bottom before);
	\item 
	in the tree on the right in Fig. \ref{fig-trees-pr2}, the vertex $j_{k-1}+1$ and the subtree with labels $(j_{k-1}+2,\dots,j_k)$
	are swapped so that now $\bfn_k=(n_1,\dots,1,n_k-1,n_{k+1},\dots,n_m)$;
	\item 
	in the two trees on the right in Fig. \ref{fig-trees-pr3}, the child of the root without further descendants 
	(labeled by $n_1$ and 1, respectively) should appear as the leftmost.
\end{itemize}
This implies that the analogue of Eq. \eqref{comb-proof1} now takes the form
\bea
&&
\Pi_{2,m}^{\rm sf} \star
\(\left\{\cXz_0,\prodi{i=1}{n_1}_\star\hcXr_i\right\}_\star 
-\prodi{i=2}{n_1}_\star\hcXr_i\star  \left\{\cXz_0,\hcXr_1\right\}_\star\)
-\hcXr_1\star \Pi_{2,m}^{\rm sf}
\star\left\{\cXz_0,\prodi{i=2}{n_1}_\star\hcXr_i\right\}_\star
\nn\\
&=&\
\left\{\Pi_{2,m}^{\rm sf} \star 
\left\{\cXz_0,\prodi{i=2}{n_1}_\star \hcXr_i\right\}_\star ,\hcXr_1 \right\}_\star,
\label{comb-proof3}
\eea
where we denoted 
\be
\Pi_{2,m}^{\rm sf}=\prodi{k=2}{m}_\star \prodi{i=j_{k-1}+1}{j_k}_{\ \ \;\star}\hcXr_i.
\ee
The factor \eqref{comb-proof3} replaces a similar factor appearing in \eqref{expinsf-n}
and by induction leads to the desired equation \eqref{intgeqn-comm}.

\subsection{Generating function}
\label{ap-gen}

Finally, we prove the formula \eqref{logcJr2} for the perturbative expansion of the refined generating function $\cHr$.
Our starting point is the perturbative expansion obtained by expanding the logarithm in the definition \eqref{def-genref}
and then substituting the expansion \eqref{expXref} of $\cXr_\gamma$ in terms of $\cXz_\gamma$. 
This results in (cf. \eqref{expinsf})
\be
\cHr(t_0) = \sum_{n=1}^{\infty}\sum_{m=1}^{n} \frac{(-1)^{m-1}}{m(y-y^{-1})^{n-1}}\[ \prod_{i=1}^n \IS_i\]
\sum_{\sum_{k=1}^m n_k=n}\frKr_{T_{\bfn}^{(m)}}
\[\prodi{k=1}{m}^\star \prodi{i=j_{k-1}+1}{j_k}^{\ \ \;\star}\cXz_i\] ,
\label{expinsf-Hr}
\ee
where the set of trees and other notations are the same as in \S\ref{ap-pert}.
As in all previous cases, we apply Lemma \ref{lemma-tree} that expresses $\frKr_{T_{\bfn}^{(m)}}$ as a sum 
over labeled linear rooted trees and then relabel the vertices of the linear trees 
so that they are labeled in the ascending order. This ensures that all contributions have the same $t_i$-dependent factor
and differ only by the order in which the variables $\cXz_i$ are multiplied.
More precisely, one gets
\be
\cHr(t_0) = \sum_{n=1}^{\infty}\[ \prod_{k=1}^n \IS_k \cK_{k-1,k}\] \cP^{(n)},
\label{expinsf-Hr2} 
\ee
where
\be  
\cP^{(n)}=
\sum_{m=1}^{n}\frac{(-1)^{m-1}}{m(y-y^{-1})^{n-1}}
\sum_{\sum_{k=1}^m n_k=n}\sum_{\sigma\in S_{\bfn}}
\prodi{k=1}{m}^\star \prodi{i=j_{k-1}+1}{j_k}^{\ \ \;\star}\cXz_{\sigma(i)} 
\label{defPint}
\ee 
and $S_{\bfn}$ is the group of permutations of $n$ elements such that
$\sigma(j_{k-1}+1)<\cdots <\sigma(j_k)$ for all $k=1,\dots,m$.

The integrand $\cP^{(n)}$ can be rewritten as a sum over all permutations
\be  
\cP^{(n)}=
\sum_{\sigma\in S_n} C_\sigma\, \prodi{i=1}{n}^{\star}\cXz_{\sigma(i)} .
\label{Psigma}
\ee 
To get the coefficients $C_\sigma$, one can note that the order of the variables $\cXz_i$ in \eqref{defPint} remains 
unaffected if one applies the same $\sigma$ to a more refined partition than $\bfn=(n_1,\dots,n_m)$. 
In fact, the order depends only on the so called ascending runs, i.e. increasing contiguous subsequences of $\sigma$.
Let $s_\sigma$ is the number of such ascending runs and $l_a$, $a=1,\dots s_\sigma$, 
are  their lengths. From the restriction $\sigma\in S_{\bfn}$, it is clear that each ascending run
should consist of one or several subsets corresponding to linear subtrees in $T_{\bfn}^{(m)}$.
Let for the $a$-th run it is equal to $m_a$, which are restricted to satisfy $\sum_{a=1}^{s_\sigma} m_a=m$.
Then we conclude that
\be  
C_\sigma=-\sum_{m_a=1}^{l_a}\frac{(-1)^{\sum_{a=1}^{s_\sigma} m_a}}{\sum_{a=1}^{s_\sigma} m_a} 
\prod_{a=1}^{s_\sigma}\binom{l_a-1}{m_a-1},
\label{logJ1-csigma}
\ee 
where the first factor is nothing but the factor $(-1)^m/m$ in \eqref{defPint}, while the second factor 
is the number of trees $T_{\bfn}^{(m)}$ fitting the condition on $\sigma$.

To compute $C_\sigma$, let us consider the function
\be 
f_{s,\bfl}(x)=\prod_{a=1}^{s}\sum_{m_a=1}^{l_a} \binom{l_a-1}{m_a-1}x^{m_a}.
\label{def-funf}
\ee
It is related to $C_\sigma$ through an integral
\be 
C_\sigma=-\int_{0}^{-1} \frac{\de x}{x}\, f_{s_\sigma,\bfl}(x),
\label{relCf}
\ee
and can easily be evaluated as follows
\be 
f_{s,\bfl}(x)=x^s\prod_{a=1}^{s}\sum_{m_a=1}^{l_a-1} \binom{l_a-1}{m_a}x^{m_a}
=x^{s} \prod_{a=1}^{s} (1+x)^{l_a-1}
= x^{s}(1+x)^{n-s},
\ee
where we took into account that $\sum_{a=1}^{s} l_a=n$.
Using this in \eqref{relCf}, one finds that
\be
C_\sigma= (-1)^{s_\sigma+1}\, \frac{(s_\sigma-1)! (n-s_\sigma)!}{n!}\, .
\ee
Substituting this result into \eqref{Psigma} and $\cP^{(n)}$ into \eqref{expinsf-Hr2},
one obtains an explicit perturbative expansion of the refined generating function
\be
\cHr(t_0) = \sum_{n=1}^{\infty}\[ \prod_{k=1}^n \IS_k \cK_{k-1,k}\] 
\sum_{\sigma\in S_n} \frac{(-1)^{s_\sigma+1}}{\binom{n}{s_\sigma}s_\sigma }\, \prodi{i=1}{n}^{\star}\cXz_{\sigma(i)}.
\label{expinsf-Hr3} 
\ee

Remarkably, it can be further improved so that the existence of the unrefined limit of $\cHr$ becomes manifest. 
To this end, we use a relation proven in \cite[Lemma 6]{Solomon:1968}.
Applying it to \eqref{expinsf-Hr3}, one immediately obtains the representation \eqref{logcJr2}.

\

\noindent 
{\bf Data availability statement.}
No datasets were generated or analyzed during the current study.

\noindent 
{\bf Conflict of interest.}
The authors declare that there is no conflict of interest.

\providecommand{\href}[2]{#2}\begingroup\raggedright\endgroup


\end{document}